\newcommand{\stitle}[1]{\vspace*{0.4em}\noindent{\bf #1.\/}}
\newcommand{\revised}[1]{#1}
\newcommand{\squishlist}{
	\begin{list}{$\bullet$}
		{ \setlength{\itemsep}{1pt}
			\setlength{\parsep}{1pt}
			\setlength{\topsep}{2.5pt}
			\setlength{\partopsep}{0.5pt}
			\setlength{\leftmargin}{1em}
			\setlength{\labelwidth}{1em}
			\setlength{\labelsep}{0.6em}
		}
	}
	\newcommand{\squishend}{
	\end{list}
}
\newtheorem{remark}{Remark}
\begin{document}

\title{SAQ: Pushing the Limits of Vector Quantization through Code Adjustment and Dimension Segmentation}

\author{Hui Li}
\affiliation{%
  \institution{The Chinese University of Hong Kong}
  \country{Hong Kong SAR}
}
\orcid{0009-0009-0772-6588}
\email{hli@cse.cuhk.edu.hk}

\author{Shiyuan Deng}
\affiliation{%
  \institution{Huawei Cloud}
  \country{China}
}
\orcid{0009-0008-6088-6313}
\email{dengshiyuan@huawei.com}

\author{Xiao Yan}
\authornote{Dr. Xiao Yan is the corresponding author.}
\affiliation{%
  \institution{Institute for Math \& AI, Wuhan, Wuhan University}
  \country{China}
}
\orcid{0000-0002-2122-915X}
\email{yanxiaosunny@whu.edu.cn}

\author{Xiangyu Zhi}
\affiliation{%
  \institution{The Chinese University of Hong Kong}
  \country{Hong Kong SAR}
}
\orcid{0009-0001-0122-9240}
\email{xyzhi24@cse.cuhk.edu.hk}

\author{James Cheng}
\affiliation{%
  \institution{The Chinese University of Hong Kong}
  \country{Hong Kong SAR}
}
\orcid{0000-0001-6313-6288}
\email{jcheng@cse.cuhk.edu.hk}






\begin{abstract}

Approximate Nearest Neighbor Search (ANNS) plays a critical role in applications such as search engines, recommender systems, and RAG for LLMs. Vector quantization (VQ), a crucial technique for ANNS, is commonly used to reduce space overhead and accelerate distance computations. However, despite significant research advances, state-of-the-art VQ methods still face challenges in balancing encoding efficiency and quantization accuracy. To address these limitations, we propose a novel VQ method called SAQ. To improve accuracy, SAQ employs a new dimension segmentation technique to strategically partition PCA-projected vectors into segments along their dimensions. By prioritizing leading dimension segments with larger magnitudes, SAQ allocates more bits to high-impact segments, optimizing the use of the available space quota. An efficient dynamic programming algorithm is developed to optimize dimension segmentation and bit allocation, ensuring minimal quantization error. To speed up vector encoding, SAQ devises a code adjustment technique to first quantize each dimension independently and then progressively refine quantized vectors using a coordinate-descent-like approach to avoid exhaustive enumeration. Extensive experiments demonstrate SAQ's superiority over classical methods (e.g., PQ, PCA) and recent state-of-the-art approaches (e.g., LVQ, Extended RabitQ). SAQ achieves up to 80\% reduction in quantization error and accelerates encoding speed by over 80× compared to Extended RabitQ. 

\end{abstract}



\keywords{Vector quantization, nearest neighbor search, vector database}


\maketitle

\begingroup\small\noindent\raggedright\textbf{Artifact Availability:}\\
The source code, data, and/or other artifacts have been made available at \url{https://github.com/howarlii/saq/}.
\endgroup

\section{Introduction} \label{sec:intro}
With the proliferation of machine learning, embedding models~\cite{devlin2019bert, radford2021learning, shvetsova2022everything, li2022competition} are widely used to map diverse data objects, including images, videos, texts, e-commerce goods,  genes, and proteins, to high-dimensional vector embeddings that encode their semantic information. A core operation on these vectors  is \textit{nearest neighbor search} (\textit{NNS})~\cite{Elastic, pgvector, wang2021milvus, SingleStore, Qdrant, yang2020pase, mohoney2023high}, which retrieves vectors that are the most similar to a query vector from a vector dataset. NNS underpins many important applications including content search (e.g., for images and videos), recommender systems, bio-informatics, and retrieval-argumented generation (RAG) for LLMs~\cite{lewis2020retrieval}.
However, exact NNS requires a linear scan in high-dimensional space, rendering it impractical for large-scale datasets. To address this, \textit{approximate NNS} (\textit{ANNS}) has been widely adopted~\cite{huang2020embedding, xiong2020approximate}, which trades exactness for efficiency by returning most of the top-$k$ nearest neighbors. Many vector indexes~\cite{malkov2018efficient, yin2025gorgeousrevisitingdatalayout} and vector databases~\cite{Elastic, pgvector, wang2021milvus, SingleStore, Qdrant} support ANNS as the core functionality.


Vector quantization (or vector compression) maps each vector $\mathbf{o}$ to a smaller approximate vector $\bar{\mathbf{o}}$ and is a key technique for efficient ANNS. With billions of vectors and each vector having thousands of dimensions, vector datasets can be large (e.g., in TBs) and vector quantization helps reduce space consumption. Moreover, vector quantization  enables us to compute approximate distance (i.e., $\Vert \mathbf{q}-\bar{\mathbf{o}} \Vert$, where $\mathbf{q}$ is the query vector) faster than  the exact distance (i.e., $\Vert \mathbf{q}-\mathbf{o} \Vert$). Since distance computation dominates the running time of vector indexes~\cite{gao2023highdimensionalapproximatenearestneighbor}, vector quantization effectively accelerates ANNS.
The goal of vector quantization is to minimize the \textit{relative error} (defined as $\big \lvert \Vert \mathbf{q}-\bar{\mathbf{o}} \Vert^2-\Vert \mathbf{q}-\mathbf{o} \Vert^2 \big \rvert /\Vert \mathbf{q}-\mathbf{o} \Vert^2$) under a given space quota (i.e., measured by compression ratio w.r.t. the original vector or the average number of bits used for each dimension) such that  approximate distance matches exact distance.

\begin{figure}[!t]
  \centering
  \includegraphics[width=0.85\linewidth]{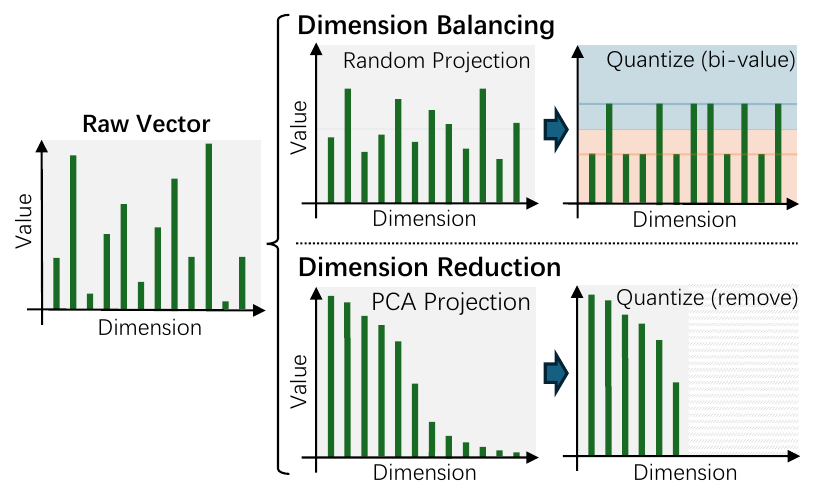}
  \caption{Illustration of dimension balancing and dimension reduction. Bar height is the magnitude of vector dimension.}
  \label{fig:alpha-fig}
  \vspace{-6mm}
\end{figure}

Many vector quantization methods have been proposed.
PQ~\cite{babenko2014additive, jegou2010product, martinez2018lsq++, wang2017survey}, OPQ~\cite{ge2013optimized}, and LOPQ~\cite{kalantidis2014locally} partition the $D$-dimensional vector space into subspaces and use K-means to learn vector codebooks for each subspace. 
AQ~\cite{babenko2014additive}, RQ~\cite{chen2010approximate}, LSQ~\cite{martinez2016revisiting, martinez2018lsq++}, and TreeQ~\cite{foote2003treeq, charami2007performance} further improve the structures and learning methods of the vector codebooks. Some other methods learn query-dependent codebooks to better approximate the query-vector distance~\cite{guo2016quantization, liu2016query}. The state-of-the-art vector quantization methods adopt two types of designs as illustrated in Figure~\ref{fig:alpha-fig}. The first one is~\textit{dimension balancing}, which uses a random orthonormal matrix $R$ to project a vector $x$ and then quantizes each dimension of $Rx$ to an integer. The representative is RaBitQ~\cite{rbq, extrbq}, which quantizes each projected vector dimension with 1 bit and provides unbiased distance estimation. The second type is~\textit{dimension reduction}, which uses a PCA projection matrix $P$ to project a vector $x$ such that the leading dimensions correspond to large eigenvalues and thus have large magnitudes. Then, the small tailing dimensions are discarded.

Despite the progresses, we observe two limitations in the state-of-the-art  vector quantization methods.
\squishlist
\item \textbf{High quantization complexity}: Currently, RaBitQ~\cite{rbq} and extended RaBitQ (E-RaBitQ)~\cite{extrbq} achieve the best accuracy. However, RaBitQ can only use 1 bit for each vector dimension, which limits accuracy, while the quantization complexity of E-RaBitQ is \(O(2^{B}\cdot D \log{D})\) where a vector has $D$ dimensions and each dimension uses $B$ bits. This is because E-RaBitQ does not handle each vector dimension independently, but an enumeration is required to decide the optimal quantized vector to approximate the original vector. Our profiling shows that when $B=9$ and $D=3,072$, quantizing 1 billion vectors with E-RaBitQ takes more than 3,600 CPU hours.

\item \textbf{Contradictory designs}: Dimension balancing attempts to make vector dimensions similar in magnitude such that the same number of bits can be used for each dimension. In contrast, dimension reduction makes the vector dimensions skewed in  magnitude so that a vector can be approximated by its leading dimensions. These two designs are fundamentally contradictory, posing problems about selecting the better method for different applications or considering synergistic potential for enhanced performance.
\squishend


The inherent conflict between the two dominant design paradigms in existing quantization methods makes it difficult to further improve the efficiency of quantization while retaining the accuracy of vector search. To resolve this challenge, we propose a novel vector quantization method, \textit{SAQ}, that overcomes the limitations of state-of-the-art methods through two key innovations: \textit{code adjustment} and \textit{dimension segmentation}. One striking contribution of SAQ is: SAQ significantly advances \textit{both} quantization efficiency and vector search accuracy of currently best methods. \revised{As shown in Figure~\ref{fig:error-tiny}, SAQ achieves significantly lower quantization errors than RaBitQ\footnote{In the subsequent discussion, we also refer to E-RaBitQ as RabitQ for simplicity as most of our discussions are related to E-RaBitQ.} with the same space quota, while delivering dramatic speed-ups of over 80x faster quantization.}

SAQ introduces a \textit{code adjustment} technique to accelerate vector quantization. In particular, after random orthonormal projection, SAQ first quantizes each dimension of a vector independently using $B$ bits. Observing that quantization accuracy depends on how well the original vector $x$ and quantized vector $\tilde{x}$ align in their directions (measured by cosine similarity), we propose to adjust the quantized code for each dimension of $\tilde{x}$ to better align with $x$. This yields a complexity of \(O( D)\) to quantize each vector, avoiding RaBitQ's code enumeration which requires  \(O(2^{B}\cdot D \log{D})\). We call the new quantization method \textit{CAQ} and show that CAQ achieves the same quantization accuracy as RaBitQ, as code adjustment essentially conducts coordinate descent~\cite{wright2015coordinate} to optimize  $\tilde{x}$.

SAQ utilizes a \textit{dimension segmentation} technique to bridge dimension reduction and dimension balancing for improved accuracy.
After PCA projection, the vector dimensions are partitioned into segments, and CAQ is applied independently to each dimension segment. The key is to allocate more bits to the segments of leading (high-magnitude) dimensions to improve accuracy, while segments of trailing dimensions use fewer bits or are discarded.
We develop a dynamic programming algorithm to optimally allocate bits across  dimension segments to minimize quantization error under a total space quota $B\cdot D$. We show that dimension segmentation ensures that SAQ's approximation error cannot be larger than RaBitQ's.

We further enhance SAQ with a set of novel optimizations. First, we design a distance estimators based on the quantized vectors and develop single-instruction-multiple-data (SIMD) implementations for efficient distance computation. Second,  CAQ supports \textit{progressive distance approximation}, enabling us to take the first $b<B$ bits for each dimension while still yielding a valid quantized vector (with reduced accuracy). Third, SAQ  supports multi-stage distance estimation, which gives lower and upper bounds for distances using the segmented approximate vectors. The two properties are useful for pruning in ANNS~\cite{gao2023highdimensionalapproximatenearestneighbor, extrbq}.

We conduct extensive experiments to evaluate SAQ and compare it with four representative vector quantization methods. The results show that under the same space quota, SAQ consistently achieves lower approximation error than the baselines. In particular, to achieve the same accuracy as 8-bit RaBitQ, SAQ only needs 5-6 bits for each dimension, which is also shown  in Figure~\ref{fig:error-tiny}. When both RaBitQ and SAQ use 8 bits for each dimension, the approximation errors of SAQ are usually below 50\% of E-RaBitQ. Moreover, SAQ can accelerate the vector quantization process of RaBitQ by \revised{up to 80x}. We also show that SAQ's accuracy gains translate to  higher query throughput for ANNS,  and that our two key designs, i.e., code adjustment and dimension segmentation, are effective.

\begin{figure}[!t]
  \centering  \includegraphics[width=0.6\linewidth]{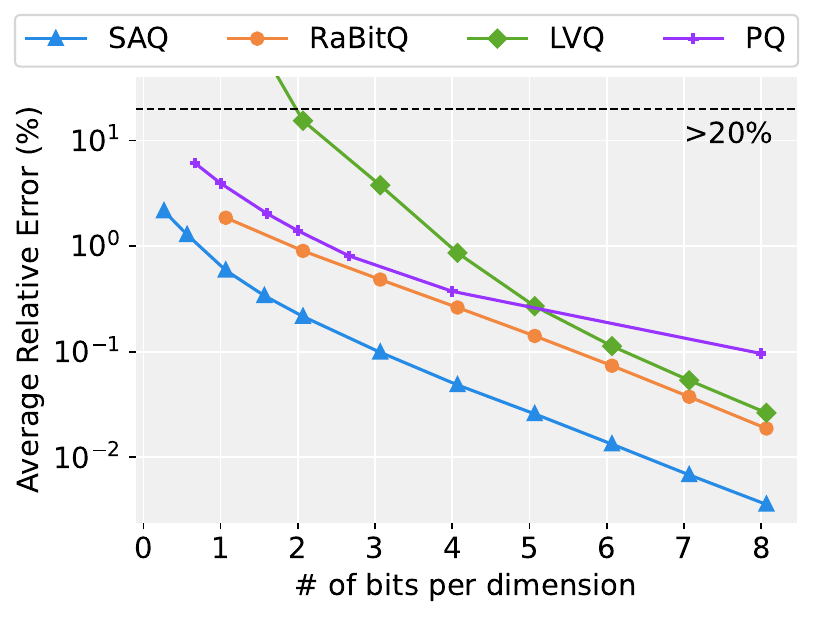}
  \caption{\revised{The vector approximation error of SAQ and representative baselines for the GIST dataset. Note that RaBitQ refers to  extended RaBitQ (same for other experiments).} }
  \label{fig:error-tiny}
\end{figure}

Our key contributions are as follows.

\squishlist
  \item We identify the critical limitations in state-of-the-art vector quantization methods, i.e., prolonged quantization time and contradictory design paradigms.

  \item We propose SAQ, which pushes the limits of current vector quantization methods in both accuracy and efficiency. SAQ comes with two key designs: 
  code adjustment for linear-time quantization  and dimension segmentation for optimal bit allocation.

  \item We design a set of novel optimizations for SAQ that includes the multi-stage distance estimator, distance bounds, dynamic programming for bit allocation, and SIMD implementations.

  \item We conduct comprehensive  experimental evaluation to demonstrate SAQ's superiority in accuracy (up to 5.4× improvement), quantization speed (\revised{up to 80× faster}), and ANNS performance (up to 12.5x higher query throughput at 95\% recall).
\squishend

\section{Preliminaries}

In this part, we introduce the basics of RaBitQ and extended RaBitQ to facilitate the subsequent discussions.

\subsection{Locally-adaptive Vector Quantization}
\revised{
Locally-adaptive Vector Quantization (LVQ)~\cite{LVQ}, a recently proposed vector quantization technique that efficiently compresses high-dimensional vectors while preserving the accuracy of similarity computations. }

\revised{
For each vector $\mathbf{x} = [x_1, \ldots, x_d]$, LVQ first mean-centers the vector by subtracting its mean $\mathbf{\mu}$ resulting in $\mathbf{x}' = \mathbf{x} - \mathbf{\mu}$, where $\mathbf{\mu}=[\mu_1, ..., \mu_d]$ is the mean of all vectors in a dataset. It then computes the minimum and maximum values of the mean-centered vector, $\ell = \min_j x'_j$ and $u = \max_j x'_j$, and divides the range $[\ell, u]$ into $2^B$ equal intervals, where $B$ is the number of bits per dimension. Each coordinate $x'_j$ is quantized to the nearest interval center:
\begin{equation}
    Q(x'_j; B, \ell, u) = \ell + \Delta  \left\lfloor \frac{x'_j - \ell}{\Delta} + \frac{1}{2}\right\rfloor, \quad \text{where } \Delta = \frac{u - \ell}{2^B - 1}.
\end{equation}
The quantized codes for all dimensions, together with $\ell$ and $u$, are stored for each vector. LVQ is simple and efficient, and by adapting the quantization range to each vector, it can achieve better accuracy than global quantization under the same bit budget. However, its accuracy may degrade under high compression rates due to the limited dynamic range per vector.
}

\subsection{RaBitQ}

RaBitQ~\cite{rbq} compresses a \( D \)-dimensional vector $x$ into a \( D \)-bit string and two floating point numbers. It provides an unbiased estimator for squared Euclidean distance and supports efficient distance computation with bit operations.

\stitle{Quantization procedure} Given a data vector \( \mathbf{o}_r \) and a query vector \( \mathbf{q}_r \), RaBitQ first normalizes them based on a reference vector \( \mathbf{c} \) (e.g., the centroid of the dataset or a cluster). The normalized vectors are expressed as
\begin{equation}
\mathbf{o} := \frac{\mathbf{o}_r - \mathbf{c}}{\|\mathbf{o}_r - \mathbf{c}\|}, \quad \mathbf{q} := \frac{\mathbf{q}_r - \mathbf{c}}{\|\mathbf{q}_r - \mathbf{c}\|}.
\end{equation}
RaBitQ estimates the inner product between the normalized vectors (i.e.,  \( \langle \mathbf{o}, \mathbf{q} \rangle \)) and computes the Euclidean distance between the original vectors (i.e., \( \mathbf{o}_r \) and \( \mathbf{q}_r \)) as:
\begin{equation}
    \begin{aligned}
    & \|\mathbf{o}_r - \mathbf{q}_r\|^2 = \|(\mathbf{o}_r  - \mathbf{c}) - (\mathbf{q}_r - \mathbf{c})\|^2
    \\
    = & \|\mathbf{o}_r - \mathbf{c}\|^2 + \|\mathbf{q}_r - \mathbf{c}\|^2 - 2 \cdot \|\mathbf{o}_r - \mathbf{c}\| \cdot \|\mathbf{q}_r - \mathbf{c}\| \cdot \langle \mathbf{q}, \mathbf{o} \rangle,
    \end{aligned}
\end{equation}
where \( \|\mathbf{o}_r - \mathbf{c}\| \) and \( \|\mathbf{q}_r - \mathbf{c}\| \) are the distances of the data and query vectors to the reference vector \( \mathbf{c} \), which can be precomputed prior to distance computation and reused. As such, RaBitQ quantizes the normalized data vector $\mathbf{o}$ and focuses on estimating  \( \langle \mathbf{q}, \mathbf{o} \rangle \).

To conduct quantization, RaBitQ first generates a random orthonormal matrix \( P \) and projects the normalized data vector \( \mathbf{o} \) by $P$, \footnote{RaBitQ actually projects \( \mathbf{o} \)  by $P^{-1}$ but this is equivalent since $P^{-1}$ is also a random orthonormal matrix.} that is, $\mathbf{o}_p := P \cdot \mathbf{o}$. Since \( P \) is orthonormal, it preserves inner product, i.e., $\langle \mathbf{q}_p, \mathbf{o}_p \rangle = \langle \mathbf{q}, \mathbf{o} \rangle $. As such, we can estimate compress $\mathbf{o}_p$ and estimate $\langle \mathbf{q}_p, \mathbf{o}_p \rangle$. Considering that \( \mathbf{o}_p \) has unit norm, RaBitQ uses the following codebook \(C\)
\begin{equation}
C := \left\{ +\frac{1}{\sqrt{D}}, -\frac{1}{\sqrt{D}} \right\}^D,
\end{equation}
where each codeword is also a unit vector, and \(\mathbf{o}_p \) is quantized to its nearest codeword in  \(C\). This essentially becomes selecting between -1 and +1 according to the sign for each dimension of \(\mathbf{o}_p \). This is also natural since after random orthonormal projection, the dimensions of $\mathbf{o}_p$ follow the same distribution, and the same number of bits should be used to quantize each dimension. The quantized vector is denoted as $\bar{\mathbf{o}}_p$.

\stitle{Distance estimator} To estimate distance, RaBitQ first rotates the normalized query vector with the random orthonormal matrix \( P \), that is, \(\mathbf{q}_p := P \cdot \frac{\mathbf{q}_r-\mathbf{c}}{\|\mathbf{q}_r-\mathbf{c}\|}\). Then, it estimates \( \langle \mathbf{o}_p, \mathbf{q}_p \rangle \) as
\begin{equation}\label{equ:estimator}
\langle \mathbf{o}_p, \mathbf{q}_p \rangle =\frac{\langle \bar{\mathbf{o}}_p, \mathbf{q}_p \rangle}{\langle \bar{\mathbf{o}}_p, \mathbf{o}_p \rangle},
\end{equation}
where $\bar{\mathbf{o}}_p$ is the quantized data vector, $\langle \bar{\mathbf{o}}_p, \mathbf{q}_p \rangle$ is computed at query time, and $\langle \bar{\mathbf{o}}_p, \mathbf{o}_p \rangle$ is computed offline and stored for each vector. It has been shown that the estimator is unbiased  and have tight error bound. We restate the lemmas as follows.

\begin{lemma} [Estimator and Error Bound]
  \label{lemma:rbq-estimator}
  The estimator of inner product is unbiased because :
  \begin{equation}
      \mathbb{E}\left( \langle \mathbf{o}_p, \mathbf{q}_p \rangle \right) = \frac{\langle \bar{\mathbf{o}}_p, \mathbf{q}_p \rangle}{\langle \bar{\mathbf{o}}_p, \mathbf{o}_p \rangle}.
  \end{equation}
  With a probability of at least $1-\exp(-c_0\epsilon_0^2)$, the error bound of the estimator satisfies
  \begin{small}
  \begin{equation}
    \mathbb{P} \left\{ \left| \frac{\langle \bar{\mathbf{o}}_p, \mathbf{q}_p' \rangle}{\langle \bar{\mathbf{o}}_p,   \mathbf{o}_p' \rangle} - \langle \mathbf{o}_p', \mathbf{q}_p' \rangle \right| > \sqrt{\frac{1 - \langle \bar{\mathbf{o}}_p, \mathbf{o}_p' \rangle^2}{\langle \bar{\mathbf{o}}_p, \mathbf{o}_p' \rangle^2}} \cdot \frac{\epsilon_0}{\sqrt{D - 1}} \right\} \leq 2e^{-c_0 \epsilon_0^2}
    \label{eq:rbq-error-bound}
  \end{equation}
  \end{small}
  where $c_0$ is a constant and $\epsilon_0$ is a parameter that controls the probability of failure of the bound.
\end{lemma}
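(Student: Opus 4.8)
The plan is to reduce both claims to the geometry of a single uniformly random direction on a sphere, since the only randomness comes from the Haar-random orthonormal matrix $P$ that generates $\mathbf{o}_p = P\mathbf{o}$ and $\mathbf{q}_p = P\mathbf{q}$. Writing $\hat{I} := \langle \bar{\mathbf{o}}_p, \mathbf{q}_p \rangle / \langle \bar{\mathbf{o}}_p, \mathbf{o}_p \rangle$ for the estimator, I would first record an exact algebraic identity by decomposing the projected query into its components parallel and orthogonal to the unit vector $\mathbf{o}_p$, namely $\mathbf{q}_p = \langle \mathbf{o}_p, \mathbf{q}_p \rangle\, \mathbf{o}_p + \sqrt{1 - \langle \mathbf{o}_p, \mathbf{q}_p \rangle^2}\, \mathbf{e}$ with $\mathbf{e}$ a unit vector orthogonal to $\mathbf{o}_p$. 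Taking the inner product with $\bar{\mathbf{o}}_p$ and dividing by $\langle \bar{\mathbf{o}}_p, \mathbf{o}_p \rangle$ yields
\[
\hat{I} = \langle \mathbf{o}_p, \mathbf{q}_p \rangle + \sqrt{1 - \langle \mathbf{o}_p, \mathbf{q}_p \rangle^2}\cdot \frac{\langle \bar{\mathbf{o}}_p, \mathbf{e} \rangle}{\langle \bar{\mathbf{o}}_p, \mathbf{o}_p \rangle},
\]
so that the entire estimation error is carried by the single scalar $\langle \bar{\mathbf{o}}_p, \mathbf{e} \rangle$.

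The next step handles the randomness, and the one subtlety is that $\bar{\mathbf{o}}_p$ itself depends on $P$ through the sign quantization of $\mathbf{o}_p$. I would resolve this by conditioning on $\mathbf{o}_p$ (equivalently on $\bar{\mathbf{o}}_p$, a deterministic function of $\mathbf{o}_p$). By rotational invariance of $P$, the pair $(\mathbf{o}_p, \mathbf{q}_p)$ is a uniformly random rotation of the fixed pair $(\mathbf{o}, \mathbf{q})$; hence, conditioned on $\mathbf{o}_p$, the inner product $\langle \mathbf{o}_p, \mathbf{q}_p \rangle = \langle \mathbf{o}, \mathbf{q} \rangle$ is fixed while the orthogonal direction $\mathbf{e}$ is uniform on the unit sphere $S^{D-2}$ of the hyperplane $\mathbf{o}_p^{\perp}$. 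Unbiasedness is then immediate: letting $\bar{\mathbf{o}}_p^{\perp}$ denote the projection of $\bar{\mathbf{o}}_p$ onto $\mathbf{o}_p^{\perp}$, we have $\langle \bar{\mathbf{o}}_p, \mathbf{e} \rangle = \langle \bar{\mathbf{o}}_p^{\perp}, \mathbf{e} \rangle$, and the symmetry $\mathbf{e} \mapsto -\mathbf{e}$ of the uniform law gives $\mathbb{E}[\langle \bar{\mathbf{o}}_p^{\perp}, \mathbf{e} \rangle \mid \mathbf{o}_p] = 0$. Taking the outer expectation over $\mathbf{o}_p$ yields $\mathbb{E}[\hat{I}] = \langle \mathbf{o}_p, \mathbf{q}_p \rangle$, which is the unbiasedness claim.

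For the tail bound I would write $\langle \bar{\mathbf{o}}_p, \mathbf{e} \rangle = \|\bar{\mathbf{o}}_p^{\perp}\|\, \langle \mathbf{u}, \mathbf{e} \rangle$, where $\mathbf{u}$ is the unit vector along $\bar{\mathbf{o}}_p^{\perp}$ and, since $\bar{\mathbf{o}}_p$ is a unit codeword, $\|\bar{\mathbf{o}}_p^{\perp}\| = \sqrt{1 - \langle \bar{\mathbf{o}}_p, \mathbf{o}_p \rangle^2}$. Substituting into the error identity and bounding $\sqrt{1 - \langle \mathbf{o}_p, \mathbf{q}_p \rangle^2} \le 1$ gives $|\hat{I} - \langle \mathbf{o}_p, \mathbf{q}_p \rangle| \le \sqrt{(1 - \langle \bar{\mathbf{o}}_p, \mathbf{o}_p \rangle^2)/\langle \bar{\mathbf{o}}_p, \mathbf{o}_p \rangle^2}\cdot |\langle \mathbf{u}, \mathbf{e} \rangle|$, which is exactly the form of the stated bound (the primes in the statement refer to these same projected unit vectors). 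It then remains to show $|\langle \mathbf{u}, \mathbf{e} \rangle| \le \epsilon_0/\sqrt{D-1}$ with probability at least $1 - 2e^{-c_0 \epsilon_0^2}$.

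This final step is the main obstacle and the only genuinely probabilistic ingredient: it is the concentration of a one-dimensional marginal of the uniform measure on $S^{D-2}$, whose density is proportional to $(1-t^2)^{(D-3)/2}$. I would establish the sub-Gaussian tail $\mathbb{P}[|\langle \mathbf{u}, \mathbf{e} \rangle| > t] \le 2e^{-c_0 (D-1) t^2}$ either from L\'evy's concentration-of-measure inequality on the sphere, using that $\mathbf{e} \mapsto \langle \mathbf{u}, \mathbf{e} \rangle$ is $1$-Lipschitz, or by directly integrating the marginal density, and then set $t = \epsilon_0/\sqrt{D-1}$ so that the exponent collapses to $c_0 \epsilon_0^2$. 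Care is needed only to pin down the constant $c_0$ and to confirm that conditioning on $\mathbf{o}_p$ keeps $\mathbf{u}$ fixed while $\mathbf{e}$ remains uniform, so that the Lipschitz (or marginal-density) argument applies verbatim and the tail probability, being uniform over the conditioning, survives the outer expectation.
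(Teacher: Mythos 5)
Your proof is correct, but there is nothing in this paper to compare it against: the paper gives no proof of this lemma, which is restated from the original RaBitQ work~\cite{rbq} (``It has been shown that the estimator is unbiased and have tight error bound. We restate the lemmas as follows.''). Measured against that cited analysis, your argument follows essentially the same route --- orthogonal decomposition of $\mathbf{q}_p$ along $\mathbf{o}_p$, conditioning on $\mathbf{o}_p$ (under which $\bar{\mathbf{o}}_p$ is a fixed sign quantization and the orthogonal direction $\mathbf{e}$ is uniform on the unit sphere of $\mathbf{o}_p^{\perp}$ by rotational invariance of the Haar measure), the symmetry $\mathbf{e}\mapsto-\mathbf{e}$ for unbiasedness, and sub-Gaussian concentration of a one-dimensional marginal of the uniform measure on the sphere for the tail --- so it is a faithful reconstruction rather than a new approach. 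Three small points: (i) you tacitly repaired the statement, which as printed wraps the expectation around the deterministic quantity $\langle \mathbf{o}_p, \mathbf{q}_p\rangle$ and leaves the primed vectors undefined; the claim you prove, $\mathbb{E}\left[\langle \bar{\mathbf{o}}_p, \mathbf{q}_p\rangle / \langle \bar{\mathbf{o}}_p, \mathbf{o}_p\rangle\right] = \langle \mathbf{o}_p, \mathbf{q}_p\rangle$, is the intended one; (ii) the marginal density of one coordinate of a uniform point on $S^{D-2}$, the unit sphere of the $(D-1)$-dimensional space $\mathbf{o}_p^{\perp}$, is proportional to $(1-t^2)^{(D-4)/2}$ rather than $(1-t^2)^{(D-3)/2}$, an off-by-one that is harmless since either exponent yields the tail $2e^{-c_0(D-1)t^2}$ and hence $2e^{-c_0\epsilon_0^2}$ at $t=\epsilon_0/\sqrt{D-1}$; (iii) for completeness, note that $\langle \bar{\mathbf{o}}_p, \mathbf{o}_p\rangle = D^{-1/2}\sum_{i=1}^{D}|\mathbf{o}_p[i]| > 0$ almost surely for sign quantization, so the division defining the estimator is legitimate.
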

It is also shown that $\langle \bar{\mathbf{o}}_p, \mathbf{o}_p \rangle$ is highly concentrated around $0.8$, and the estimation error is smaller than $O(1/\sqrt{D})$ with high probability~\cite{8104097}.

\subsection{Extended RaBitQ}\label{subsec:e-rabitq}

\begin{figure}[!t]
  \centering
  \includegraphics[width=0.8\linewidth]{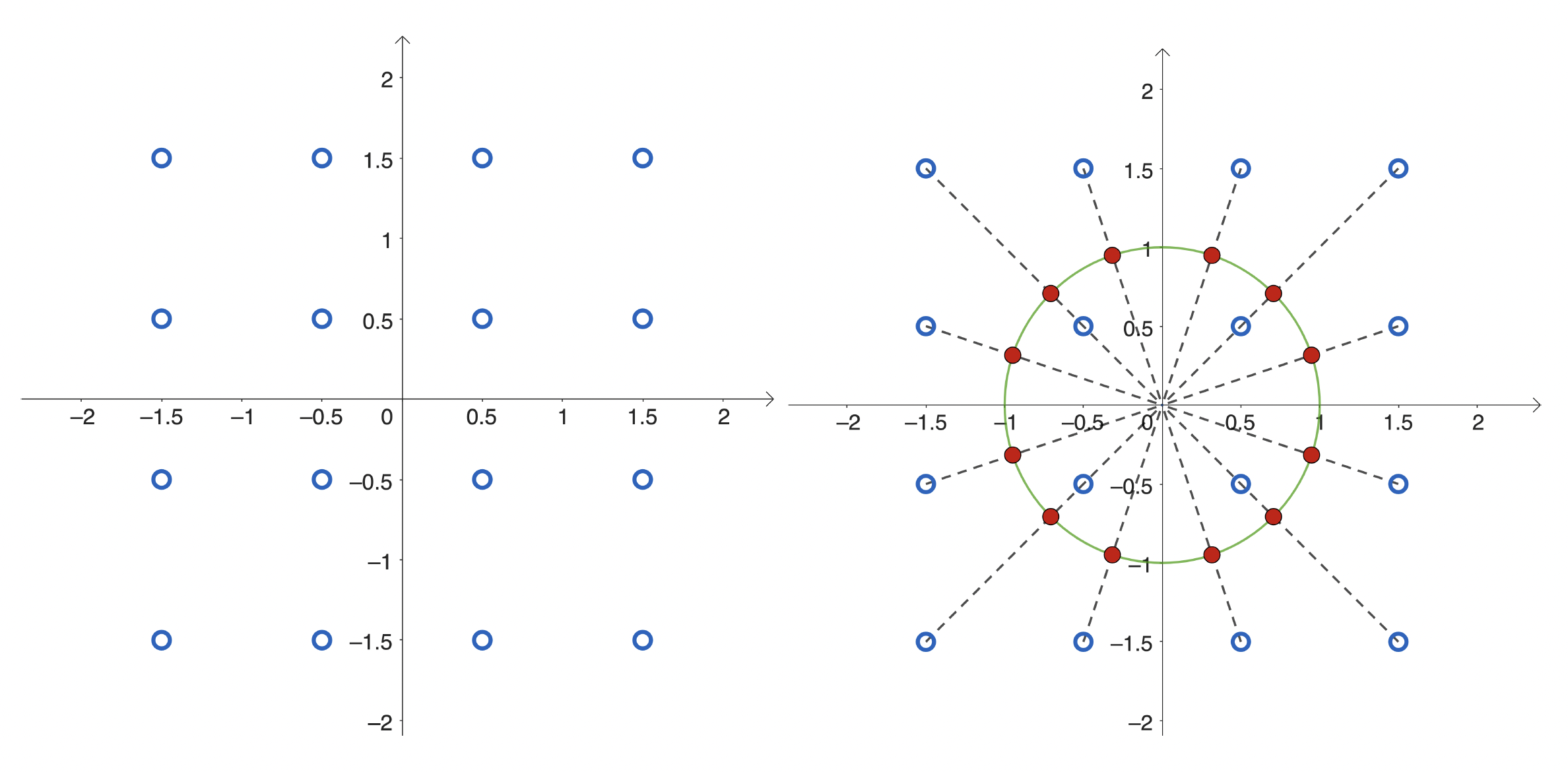}
  \caption{The codebook structure of extended RaBitQ with dimension $D\!=\!2$ and $B\!=\!2$ bits for each dimension. Red points are the final codewords, Figure reproduced from \cite{extrbq}.}
  \label{fig:rbq-codebook}
\end{figure}

RaBitQ only allows to use 1-bit for each vector dimension, which limits accuracy. To improve accuracy, it pads $\mathbf{o}$ with zeros to a dimension $D'>D$ and uses $D$ bits for quantization. However, this is shown to be sub-optimal, and extended RaBitQ~\cite{extrbq} (denoted as E-RaBitQ) is proposed to support \(B\)-bit quantization for each vector dimension, where \(B\) is a positive integer. In particular, E-RaBitQ quantizes a \(D\)-dimensional vector \( \mathbf{o}_r \) into a \((D*B)\)-bit string and two floating point numbers.

The normalization and projection of E-RaBitQ are the same as RaBitQ but E-RaBitQ uses the following codebook $\mathcal{G}_{r}$ to quantize \(\mathbf{o}_p\)
\begin{small}
\begin{equation}
  \label{eq:rbq-codebook}
  \begin{aligned}
  \mathcal{G} := & \left\{ -\frac{2^B -1}{2} +u \, \middle| \, u = 0, 1, 2, 3, ..., 2^B - 1 \right\}^D \\
  \mathcal{G}_{r} := & \left\{ \frac{\mathbf{y}}{\left\| \mathbf{y} \right\|}  \middle|  \mathbf{y}\in G \right\}^D.
  \end{aligned}
\end{equation}
\end{small}
As shown in Figure~\ref{fig:rbq-codebook}, the raw codewords form a regular grid in $D$-dimensional space, while the actual codewords are scaled to the unit sphere to have unit norm. This is because \(\mathbf{o}_p\) also has unit norm.

A data vector \(\mathbf{o}_p\) finds the nearest codeword in $\mathcal{G}_{r}$ as its quantized vector and stores the corresponding quantization code \(\bar{\mathbf{o}}_b \in \left\{ 0,1,2,3,...,2^B-1 \right\}^D \) for the codeword. However, it is difficult to find the neatest codeword in $\mathcal{G}_{r}$. E-RaBitQ propose a pruned enumeration algorithm to search the codeword with a time complexity of $O(2^B\cdot D\log{D})$. As such, the quantization time can be long with using a few bits (e.g. $B\ge 8$) for each dimension for high accuracy.

The distance estimator of E-RaBitQ is the same as RaBitQ, and its error bound is shown empirically as follows.
\begin{remark} [Error Bound]
  Let $\epsilon$ be the absolute error in estimating the inner product of unit vectors. With >99.9\% probability, we have $\epsilon < 2^{-B} \cdot c_{\epsilon} / \sqrt{D}$ where $c_{\epsilon}=5.75$
\end{remark}
This is asymptotically optimal as the error scales with $2^{-B}$.

It has been shown that when $B>1$, taking the most significant bit for each dimension can get the quantized vector of the original RaBitQ. As such, E-RaBitQ proposes a progressive strategy to accelerate vector search, which first uses the most significant bits of $\bar{\mathbf{o}}_p$ to compute bounds for the distance, and full-precession computation is used only when $\mathbf{o}$ is likely to have smaller distance than the current top-$k$ neighbors. However, taking $b<B$ but $b>1$ bits from  $\bar{\mathbf{o}}_p$ does not necessary form a valid quantized vector. In the subsequent paper, we denote E-RaBitQ as RaBitQ for simplicity.

\section{Code Adjustment}

As discussed in Section~\ref{subsec:e-rabitq}, RaBitQ has a complexity of \(O(2^{B}\cdot D \log{D})\) for encoding a $D$-dimension vector with $B\cdot D$ bits. Therefore, vector quantization can be time consuming. For example, for a dataset with a billion vectors, $D=3072$ and $B=9$, RaBitQ  takes more than 3,600 CPU hours. As we mentioned in Section~\ref{sec:intro}, SAQ segments dimensions and uses more bits for segments with large magnitudes. 
If we directly use RaBitQ for each segment, it could lead to a longer quantization time since $B>10$ bits may be used for each dimension of the leading segments and RaBitQ's indexing time grows exponentially with \(B\).

To reduce quantization time, we propose \textit{code adjustment quantization} (\textit{CAQ}), which reduces the quantization complexity drastically from  \(O(2^{B}\cdot D \log{D})\) to \(O(D)\), while maintaining the same empirical error and efficiency for distance estimation as RaBitQ. Another special feature of CAQ is, using $B$ bits for each dimension, taking first $b$ bits ($1<b<B$) for each dimension from the quantized code $\bar{\mathbf{c}}_a$ still forms a valid quantized vector. This provides more opportunities for progressive distance computation, which is not supported in RaBitQ. Table~\ref{tab:notation} lists the frequently used notations.

\begin{table}
  \caption{Frequently used notations in the paper}
  \label{tab:notation}
  \begin{tabular}{ll}
    \toprule
    \textbf{Notation} & \textbf{Definition} \\
    \midrule
    \(\mathbf{o}_r\), \(\mathbf{q}_r\) & the original data and query vectors  \\
    \(\mathbf{o}\), \(\mathbf{q}\) & \revised{the rotated and deducted data and query vectors}  \\
    \(\bar{\mathbf{o}}\) & the quantized vector of \(\mathbf{o}\) \\
    \(\bar{\mathbf{o}}_a\) & the adjusted quantized vector of \(\mathbf{o}\) \\
    \(\bar{\mathbf{c}}, \bar{\mathbf{c}}_a\) & the quantization code of \(\bar{\mathbf{o}}\) and $\bar{\mathbf{o}}_a$ \\
    $[C]$ & an integer set with $\{0, 1, 2, \cdots, C\}$ \\
    \bottomrule
  \end{tabular}
\end{table}

\stitle{Insights} RaBitQ has high quantization complexity because it requires the quantized vector (and thus vector codewords) $\bar{\mathbf{o}}_p$ to have unit norm. This makes the dimensions of $\bar{\mathbf{o}}_p$ dependent as their squared values must sum to 1 and prevents handling each dimension of $\mathbf{o}_p$
 independently.
Such a design is reasonable at first glance since the norm $\|\mathbf{o}_r - \mathbf{c}\|$ is stored explicitly, and $\bar{\mathbf{o}}_p$ quantizes the direction vector $\mathbf{o}_p$, which should have unit norm. However, by inspecting the estimator in Eq~\eqref{equ:estimator}, we observe that the norm of $\bar{\mathbf{o}}_p$ does not affect estimation. This is, $\bar{\mathbf{o}}_p$ appears in both the numerator and denominator, and thus scaling $\bar{\mathbf{o}}_p$ will not change the estimated value. Instead, $\bar{\mathbf{o}}_p$ should only be required to align with $\mathbf{o}_p$ in direction. Therefore, we remove the unit norm constraint for $\bar{\mathbf{o}}_p$ and consider each dimension of $\mathbf{o}_p$ individually. Intuitively, CAQ works like coordinate descent in optimization, i.e., it starts with a raw quantized vector and then adjusts its dimensions to better align with  $\mathbf{o}_p$ in direction.

\subsection{Quantization Procedure}

Like RaBitQ, CAQ deducts a reference vector $\mathbf{c}$ from the data and query vectors and rotates them with a random orthonormal matrix \(P\). Let \(\mathbf{o}_r\) and \(\mathbf{q}_r\) be the raw data and the query vectors, and the rotated vectors are:
\begin{equation}
  \mathbf{o} = P \cdot (\mathbf{o}_r - \mathbf{c}), \quad \mathbf{q} = P \cdot (\mathbf{q}_r - \mathbf{c}).
  \label{eq:rotate}
\end{equation}
The distance between the original data vector \(\mathbf{o}_r\) and query vector \(\mathbf{q}_r\) is equal to the distance between the rotated data vector \(\mathbf{o}\) and query vector \(\mathbf{q}\) since \(P\) is orthonormal. For simplicity, we also call \(\mathbf{o}\) and \(\mathbf{q}\) data and query vectors.

\revised{ CAQ initializes the quantized vector $\bar{\mathbf{o}}$ similar with LVQ~\cite{LVQ}. }
In particular, for data vector $\mathbf{o}$, let $v_{\max} = \max_{i\in[D]} |\mathbf{o}[i]|$, i.e., the maximum magnitude in $\mathbf{o}$. We divide the range of $[-v_{\max}, v_{\max} ]$ into $2^B$ uniform intervals, each with length $\Delta = (2\cdot v_{\max})/ 2^B$. The midpoint of the $x^{th}$ interval is $-v_{max} + \Delta(x+0.5)$, which is used to quantify the dimensions of $\mathbf{o}$ whose value falls in its corresponding interval. These midpoints form a $D$-dimensional uniform grid, and for each dimension $i\in [D]$, we have
\begin{equation}
  \begin{aligned}
    \label{eq:find_nearest_code}
    \bar{\mathbf{c}}[i] & := \left\lfloor \frac{\mathbf{o}[i] + v_{\max}}{\Delta} \right\rfloor
  \end{aligned}
\end{equation}
\noindent as the quantization code. Data vector $\mathbf{o}$ is quantized as
\begin{equation}
    \bar{\mathbf{o}} := \Delta (\bar{\mathbf{c}} +0.5) - v_{\max}\cdot\mathbf{1}_D.
\end{equation}
LVQ stops here and uses $\bar{\mathbf{o}}$ to estimate  distance. \revised{ However, according to RaBitQ's analysis \cite{extrbq}, to approximate the data vector, the quantization vector should align with the data vector $\mathbf{o}$ in direction.} As such, quantization should find the vector $\mathbf{x}$ with the largest cosine similarity to $\mathbf{o}$, which is defined as
\begin{equation}
\label{eq:adj-obj}
\mathcal{L}(\mathbf{x}, \mathbf{o}) = \frac{\mathbf{x} \cdot \mathbf{o}}{\left\|\mathbf{x}\right\|_2 \cdot \left\|\mathbf{o}\right\|_2}.
\end{equation}

\revised{
LVQ only produce the quantization vector $\bar{o}$ that is the nearest to the data vector $\mathbf{o}$ among all possible quantization vectors and it may not optimize the objection function in Eq.~\eqref{eq:adj-obj}. However, as we have observed, the cosine similarity can be significantly improved by adjusting and refining $\bar{\mathbf{o}}$ to better align with the data vector $\mathbf{o}$. We propose an efficient code adjustment algorithm shown in Algorithm~\ref{alg:find-adjusted-quantization}.
Lines 7-11 try to adjust a dimension of the quantized vector by a step of $\Delta$ and see if the direction is better aligned. The adjustment iterates over all dimensions and for a limited number of rounds (Lines 5-6). }

\revised{
Figure~\ref{fig:step34} illustrates how the code adjustment algorithm adjusts the initial code produced from LVQ to improve cosine similarity. }

\begin{algorithm}[!t]
  \caption{Adjustment of the LVQ Quantized Vector}
  \label{alg:find-adjusted-quantization}
  \textbf{Input}: round limit $r$, data vector $\mathbf{o}$, start point $\bar{\mathbf{o}}$, value $v_{\max}$ \\
  \textbf{Output}: the final quantized vector $\bar{\mathbf{o}}_a$ for data vector $\mathbf{o}$

  \BlankLine
  Step size $\Delta \leftarrow (2\cdot v_{\max})/2^B$ \;
  Initialize $\mathbf{x} \leftarrow \bar{\mathbf{o}}$ \;

  \For{$round \leftarrow 1$ \KwTo $r$}{
    \For{$i \leftarrow 1$ \KwTo $D$}{
      \For{$\delta \in \{\Delta, -\Delta \}$}{
        $\mathbf{x}' \leftarrow \mathbf{x}$ \;
        $\mathbf{x}'[i] \leftarrow \mathbf{x}'[i] + \delta$ \;
        \If{$\mathbf{x}'[i]  \in [-v_{\max}, v_{\max}] \land \mathcal{L}(\mathbf{x}',\mathbf{o}) > \mathcal{L}(\mathbf{x},\mathbf{o})$}{
          $\mathbf{x} \leftarrow \mathbf{x}'$ \;
        }
      }
    }
  }
  $\bar{\mathbf{o}}_a \leftarrow \mathbf{x}$

  \Return{$\bar{\mathbf{o}}_a$}
\end{algorithm}

Since Algorithm~\ref{alg:find-adjusted-quantization} only changes one vector dimension for each adjustment, we do not need to recompute $\mathcal{L}(\mathbf{x}',\mathbf{o})$ by enumerating all dimensions. Instead, we only need to recompute the contribution of the current dimension to $\mathcal{L}(\mathbf{x}',\mathbf{o})$. Thus, the time complexity of each adjustment is $O(1)$.
The overall time complexity of Algorithm~\ref{alg:find-adjusted-quantization} is $O(r\cdot D)$, which is in the same $O(D)$ order for computing $\bar{\mathbf{o}}$. \revised{We evaluate the quantization accuracy with different number of adjustment iteration $r$ in Section ~\ref{sec:micro_results}.
In practice, we recommend setting $r\in[4, 8]$, which is sufficient to obtain a quantized vector with high quality.}

After adjustment, we obtain the approximate quantization vector $\bar{\mathbf{o}}_a$ and compute the final quantization code $\bar{\mathbf{c}}_a$ using Eq.~\eqref{eq:find_nearest_code}. The final quantization code is a $D$-dimensional vector whose coordinates are $B$-bit unsigned integers so that we can store the code with a $(B*D)$-bit string. Like RaBitQ, CAQ uses two additonal float numbers for each vector to store the norm and $\langle \bar{\mathbf{o}}_a, \mathbf{o} \rangle$.

\begin{figure}[t]
  \centering
  \includegraphics[width=0.8\linewidth]{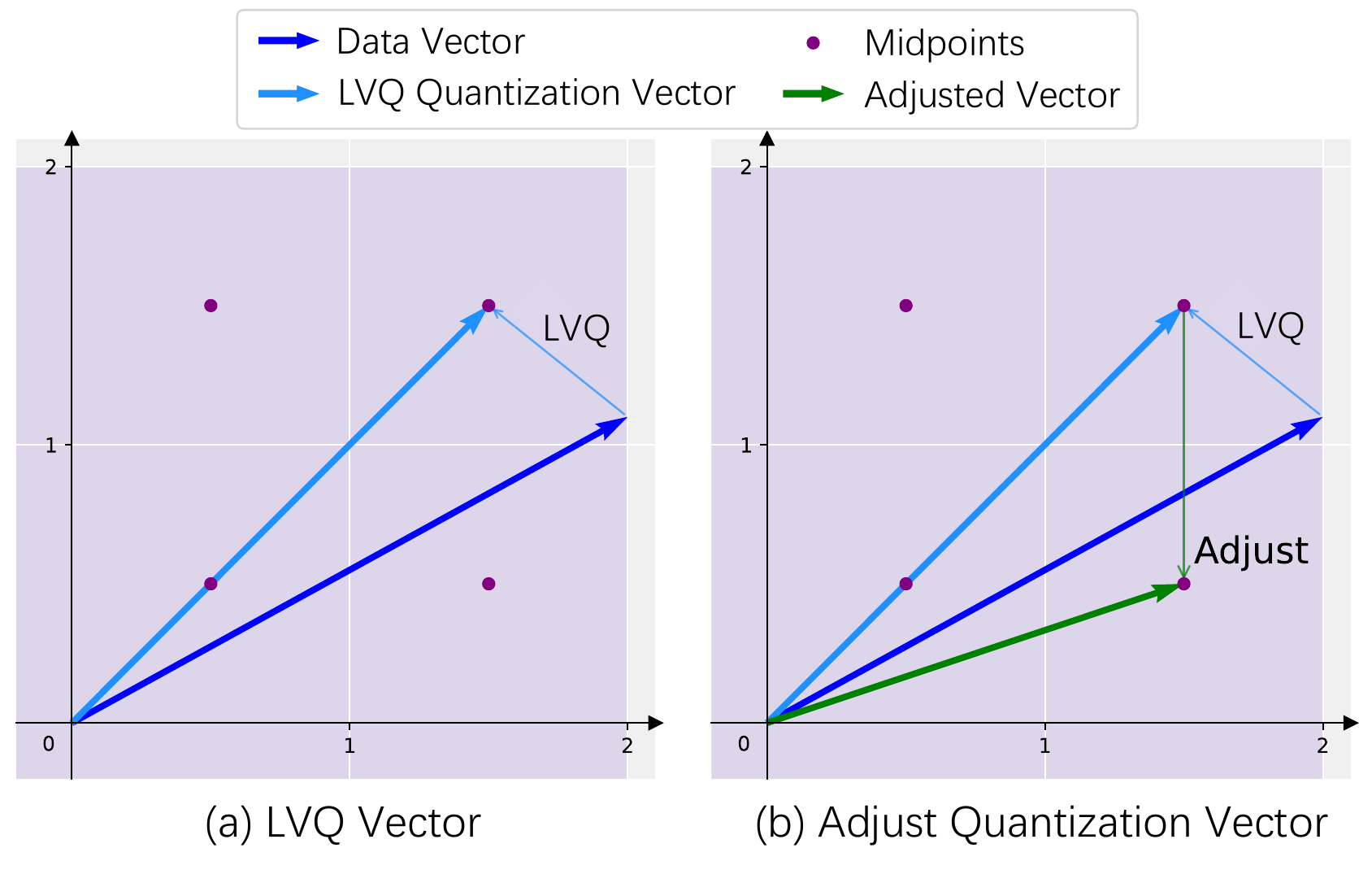}
  \caption{\revised{The procedure of CAQ for quantization, which first starts with LVQ and then adjusts the quantized vector to align with the data vector in direction.}}
  \label{fig:step34}
\end{figure}

\subsection{Distance Estimation} \label{sec:DistEst}

We adopt the same distance estimator as RaBitQ, i.e., approximating $\langle \mathbf{o}, \mathbf{q}\rangle$ as $\langle \bar{\mathbf{o}}_a, \mathbf{q} \rangle/\langle \bar{\mathbf{o}}_a, \mathbf{o} \rangle \cdot |\mathbf{o}|^2$. As discussed earlier, $\langle \bar{\mathbf{o}}_a, \mathbf{o} \rangle$ is precomputed and stored. In the query phase, we can compute $\langle \bar{\mathbf{o}}_a, \mathbf{q} \rangle$ using the integer quantization code $\bar{\mathbf{c}}_a$ without decompression as follows

\begin{equation}
  \begin{aligned}
    \langle \bar{\mathbf{o}}_a, \mathbf{q} \rangle = &  \sum_{i=1}^D \bar{\mathbf{o}}_a[i] \cdot \mathbf{q}[i] \\
    = & \sum_{i=1}^D \left( \Delta (\bar{\mathbf{c}}_a[i] + 0.5) - v_{\max} \right) \cdot \mathbf{q}[i] \\
    = &  \Delta \langle \bar{\mathbf{c}}_a, \mathbf{q} \rangle + q_{\text{sum}} (-v_{\max}+\Delta/2),
  \end{aligned}
    \label{eq:estimator_inner_product}
\end{equation}
where $q_{sum} = \sum_{i=1}^D \mathbf{q}[i]$, which only needs to be computed once for each query vector. Like RaBitQ, if we quantize each dimension of a query vector to integer,  Eq.~\ref{eq:estimator_inner_product} will mostly use integer computation.

\stitle{Progressive distance approximation} CAQ supports progressive distance approximation using the prefix of quantized code of each dimension with an arbitrary length.
In particular, let $\bar{\mathbf{c}}$ denote the first $b$ bits sampled from the native $B$ bit quantization code $\bar{\mathbf{c}}_a$ in each dimension, that is, $\bar{\mathbf{c}}_s = \lfloor \bar{\mathbf{c}}_a / 2^{B-b} \rfloor$. The inner product can be estimated using $\bar{\mathbf{c}}_s$ via Eq.~\ref{eq:estimator_inner_product} by replacing $\Delta$ with $\Delta'=\Delta \cdot 2^{B-b}$ and $\bar{\mathbf{c}}_a$ with $\bar{\mathbf{c}}_s$.  Our experiments in Section~\ref{sec:progressive-error} show that the prefix $\bar{\mathbf{c}}_s$ yields almost the same estimation error as a native CAQ quantized vector using $b$-bits for each dimension. The progressive distance estimator of CAQ enables vector search to conduct progressive distance refinement (e.g., first with 1-bit, then 2-bit, and finally 8-bit). This allows us to provide multiple efficiency-accuracy trade-off options from the same quantization code to satisfy various requirements (e.g., we use it in our multi-stage estimation in Section~\ref{sec:saq-estimator}). In contrast, RaBitQ only supports progressive approximation with $b=1$, which is rather restrictive in its usage.

\subsection{Analysis}

Recall that, to approximate a data vector $\mathbf{o}$, RaBitQ finds its nearest codeword in codebook $ \mathcal{G}_{r}$ that contains unit norm vectors. That is, RaBitQ solves the following optimization problem for quantization
\begin{equation}
    \operatorname{argmax}_{\mathbf{x}\in \mathcal{G}_{r}} \mathcal{L}(\mathbf{x}, \mathbf{o}).
\end{equation}
CAQ finds the codeword that best aligns with $\mathbf{o}$ in direction in a codebook $\mathcal{D}_\text{CAQ}$. That is, CAQ solves the following optimization problem:
\begin{equation}\label{equ:p-CAQ}
    \operatorname{argmax}_{\mathbf{x}\in \mathcal{D}_\text{CAQ}}  \mathcal{L}(\mathbf{x}, \mathbf{o})
\end{equation}
\noindent among $\mathcal{D}_\text{CAQ}$. We show in Lemma~\ref{lem:caq-error-bound} that the codebooks of RaBitQ  and CAQ are essentially the same.

\begin{lemma}
  \label{lem:caq-error-bound}
  $\mathcal{D}_\text{CAQ}$ of CAQ is equivalent to $\mathcal{G}_r$ in RaBitQ.
\end{lemma}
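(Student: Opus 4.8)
The plan is to show that, although the two codebooks are defined differently (RaBitQ normalizes its grid points to unit norm while CAQ scales them by $\Delta$), they consist of codewords pointing in exactly the same set of directions; since both the quantization objective $\mathcal{L}(\cdot,\mathbf{o})$ and the distance estimator depend on a codeword only through its direction (as observed in the Insights), this direction-level identity is the precise sense in which the codebooks are ``equivalent.'' First I would write CAQ's codewords explicitly: substituting $\Delta = 2v_{\max}/2^B$ into $\bar{\mathbf{o}} = \Delta(\bar{\mathbf{c}}_a + 0.5) - v_{\max}\mathbf{1}_D$ and simplifying (note $\tfrac{1}{2}\Delta - v_{\max} = -\Delta\tfrac{2^B-1}{2}$) gives, coordinatewise,
\[
\bar{\mathbf{o}}[i] = \Delta\left(\bar{\mathbf{c}}_a[i] - \frac{2^B-1}{2}\right), \qquad \bar{\mathbf{c}}_a[i]\in\{0,1,\dots,2^B-1\}.
\]
Because $\bar{\mathbf{c}}_a[i] - \tfrac{2^B-1}{2}$ ranges over precisely the values $\{-\tfrac{2^B-1}{2}, -\tfrac{2^B-1}{2}+1, \dots, \tfrac{2^B-1}{2}\}$, which is the per-dimension value set of $\mathcal{G}$, every CAQ codeword equals $\Delta\mathbf{g}$ for some $\mathbf{g}\in\mathcal{G}$. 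Thus $\mathcal{D}_\text{CAQ}=\{\Delta\mathbf{g}:\mathbf{g}\in\mathcal{G}\}$ and $\mathcal{G}_r=\{\mathbf{g}/\|\mathbf{g}\|:\mathbf{g}\in\mathcal{G}\}$ are both indexed by the same grid $\mathcal{G}$. I would also record here that each $\mathbf{g}$ has all-nonzero (half-integer) coordinates, so $\|\mathbf{g}\|>0$ and the normalization $\mathbf{g}/\|\mathbf{g}\|$ is always well-defined.

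Next I would exhibit the direction-preserving bijection. For each $\mathbf{g}\in\mathcal{G}$ we have $\Delta\mathbf{g} = (\Delta\|\mathbf{g}\|)\cdot(\mathbf{g}/\|\mathbf{g}\|)$ with positive scalar $\Delta\|\mathbf{g}\|>0$, so the CAQ codeword $\Delta\mathbf{g}$ and the RaBitQ codeword $\mathbf{g}/\|\mathbf{g}\|$ lie on the same ray. Since cosine similarity is invariant under positive rescaling of its first argument, $\mathcal{L}(\Delta\mathbf{g},\mathbf{o})=\mathcal{L}(\mathbf{g}/\|\mathbf{g}\|,\mathbf{o})$ for every $\mathbf{g}$ and every $\mathbf{o}$. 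Consequently the maximizers of the CAQ problem in Eq.~\eqref{equ:p-CAQ} and of the RaBitQ problem $\operatorname{argmax}_{\mathbf{x}\in\mathcal{G}_r}\mathcal{L}(\mathbf{x},\mathbf{o})$ occur at the same grid index, attain the same optimal value, and — by the scale-invariance of the estimator — produce identical distance estimates. This is exactly the equivalence claimed, and it is what lets the error guarantees of RaBitQ carry over to CAQ.

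The one thing to pin down carefully is that $\mathcal{D}_\text{CAQ}$ really is the \emph{full} scaled grid rather than some hard-to-characterize subset reachable by the coordinate descent, and that ``equivalent'' is being used in the right sense. For the former I would observe that the feasibility test $\mathbf{x}'[i]\in[-v_{\max},v_{\max}]$ in Algorithm~\ref{alg:find-adjusted-quantization} admits exactly the codes $\bar{\mathbf{c}}_a[i]\in\{0,\dots,2^B-1\}$, so each $\pm\Delta$ step moves between adjacent interval midpoints without ever leaving the grid; hence the codebook over which CAQ optimizes is the entire set $\{\Delta\mathbf{g}:\mathbf{g}\in\mathcal{G}\}$. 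The main conceptual obstacle — and the place a careless argument could go wrong — is that the two codebooks are \emph{not} equal as point sets: they differ by a per-codeword positive scaling. The equivalence is therefore an identity of direction classes, which is legitimate precisely because neither the objective $\mathcal{L}$ nor the estimator in Eq.~\eqref{equ:estimator} sees the norm of a codeword. Once this is made explicit, the remainder is the routine algebra above.
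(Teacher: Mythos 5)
Your proof is correct and follows essentially the same route as the paper's: identify the per-dimension value set of $\mathcal{D}_\text{CAQ}$, observe that it is a positive rescaling of RaBitQ's unnormalized grid $\mathcal{G}$, and conclude equivalence because neither the cosine objective $\mathcal{L}$ nor the estimator in Eq.~\eqref{equ:estimator} depends on codeword norms. Your write-up is in fact slightly more careful than the paper's --- you use the correct scaling factor $\Delta$ (the paper's proof says ``dividing by $v_{\max}$,'' a minor algebra slip), verify that the normalization $\mathbf{g}/\|\mathbf{g}\|$ is well-defined, and check that the adjustment steps of Algorithm~\ref{alg:find-adjusted-quantization} range over the full grid --- but the underlying argument is identical.
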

\begin{proof}
From Lines 6-11 of Algorithm \ref{alg:find-adjusted-quantization}, the dimensions of $\mathbf{x}$ take values from $\left\{-v_{\max}+\Delta \cdot (i + 0.5) \mid i \in [2^{B} - 1]\right\}$. Thus, we have
$$
    \mathcal{D}_\text{CAQ} = \left\{-v_{\max}+\Delta \cdot (i + 0.5) \mid i \in [2^{B} - 1]\right\}^{D}.
$$
\noindent With $\Delta = 2 \cdot v_{\max} / 2^B$, dividing the vectors in $\mathcal{D}_\text{CAQ}$ by $v_{\max}$ gives $\left\{ -(2^B -1)/{2} + i \, \middle| \, i \in [2^{B} - 1] \right\}^D$, which matches the unnormalized  codebook $\mathcal{G}$ of RaBitQ in Eq.~\ref{eq:rbq-codebook}. The normalization only changes the norm of the codewords but not the direction, and we have discussed that the norms of the codewords do not affect the estimated inner product.
\end{proof}
Therefore, if CAQ can solve its optimization problem in Eq~\eqref{equ:p-CAQ}, it achieves the unbiased estimation and error bound as RaBitQ. Although the coordinate descent style optimization of Algorithm~\ref{alg:find-adjusted-quantization} does not necessarily converge to the optimal, empirically, we observe that CAQ achieves identical estimation errors as RaBitQ.

\section{Dimension Segmentation}
\label{sec:SAQ}

In this section, we present \textit{Segmented CAQ} (\textit{SAQ}). SAQ combines the benefits of both \textit{dimension balancing} and \textit{dimension reduction} to push the performance limits of vector quantization. In Section \ref{sec:saq-motivation}, we first introduce the motivation of SAQ and formulate the problem of finding a quantization plan. Then we present a dynamic programming algorithm to find the optimal quantization plan in Section \ref{sec:saq-dp}.
For query processing, in Section~\ref{sec:saq-estimator}, we show how to use the quantization plan to estimate distance. We also present a multi-stage estimator that facilitates candidate pruning based on the property of the quantization plan.


\subsection{Motivation and Problem Formulation}
\label{sec:saq-motivation}

CAQ requires the random orthonormal projection because it treats each vector dimension equivalently by quantizing them with the same number of bits. If some dimensions have much larger variances than the others, their accuracy will hurt. In an extreme example, consider a dataset of two-dimensional vectors whose values for the first dimension follow the normal distribution and the values for the second dimension are the same fixed value. The bits assigned for the second dimension will be wasted as we can store one copy of the specific value for all data vectors. By a random projection, the variance is scattered across all dimensions to ensure that all bits are fully utilized to boost accuracy. We refer to this technique as \textbf{ dimension balancing}.

In the reverse direction, if we concentrate the data vectors' variances into certain dimensions, we can remove the dimensions with negligible variances. PCA is widely used for this purpose. A recent study~\cite{wangweipca} indicates that the high-dimensional vector coordinates, once rotated by a PCA matrix, exhibit a long-tailed variance distribution, which implies that only a few dimensions hold most of the variance (Figure \ref{fig:pca-longtail}). This characteristic was used to create an approximate distance estimator that uses the leading dimensions. These techniques are commonly known as \textbf{dimension reduction}.
Dimension reduction is less general, as it is highly dependent on the data distribution. When PCA fails to polarize the variance, a fixed dimension reduction ratio will result in poor accuracy.

The common idea behind these two methods indicates that dimensions with higher variance should be allocated more bits and those with lower variance fewer bits. Building on this observation, we introduce \textit{Segmented CAQ} (\textit{SAQ}), which uses varying compression ratios based on dimension variance to improve accuracy. Unlike a common parameter $B$ for all dimensions as in CAQ,  SAQ takes a parameter $Q_{quota}$ to represent the total bit quota to quantize the entire vector. Given a dataset, we first learn a PCA rotation matrix and rotate all data vectors with the PCA matrix. In this way, we polarize the variance and obtain $\sigma_1 \geq \sigma_2 \geq \cdots \geq \sigma_D$, where $\sigma_i$ is the variance of the values taken by the $i^{\text{th}}$ dimensions of the rotated vectors. SAQ  then performs quantization following a quantization plan, and an example is illustrated in Figure~\ref{fig:segment-fig}.


\begin{figure}[!]
  \subfigure[DEEP]{
    \begin{minipage}[t]{0.45\linewidth}
      \centering
      \includegraphics[width=\linewidth]{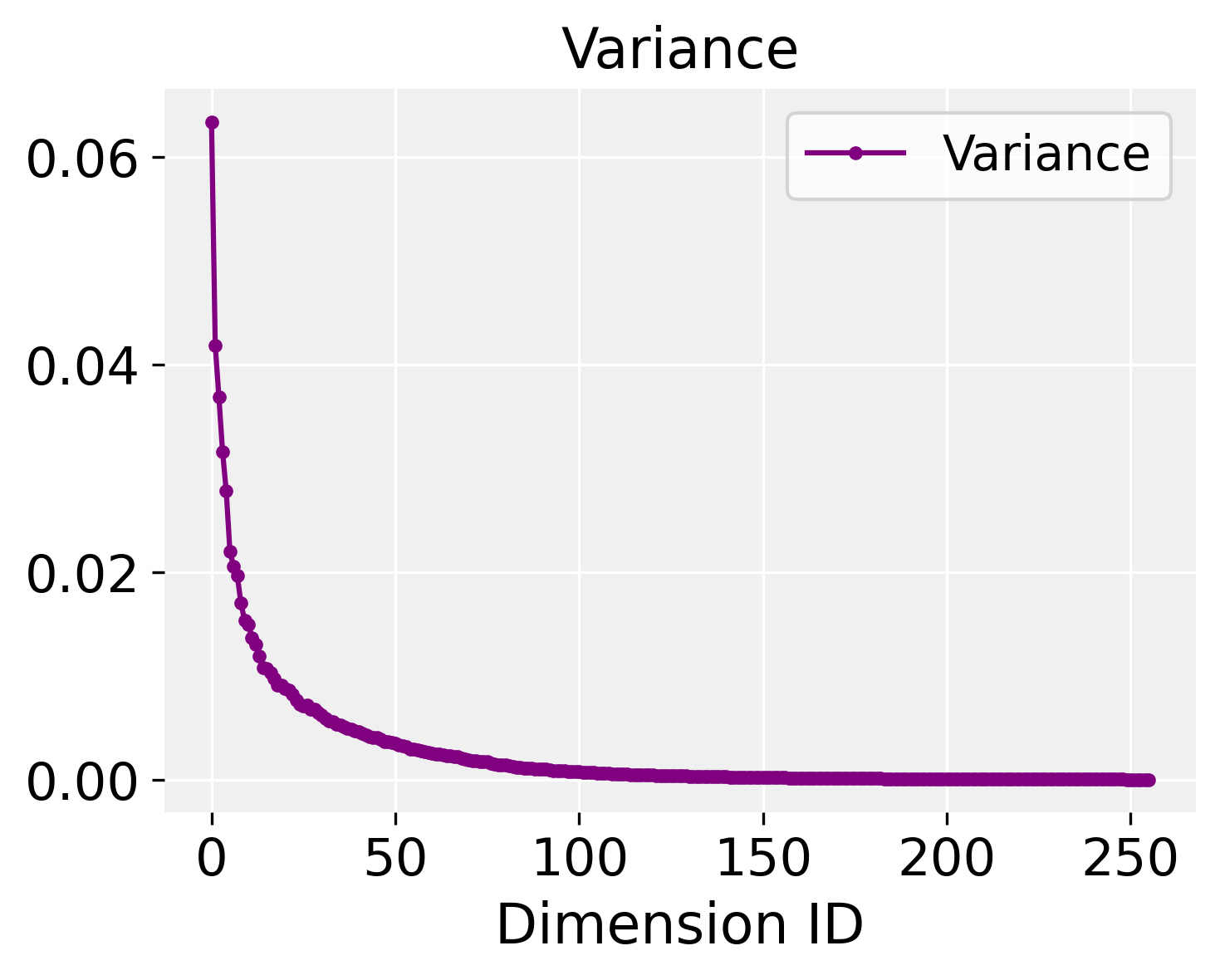}
    \end{minipage}
  }
  \subfigure[OpenAI-1536]{
    \begin{minipage}[t]{0.45\linewidth}
      \centering
      \includegraphics[width=\linewidth]{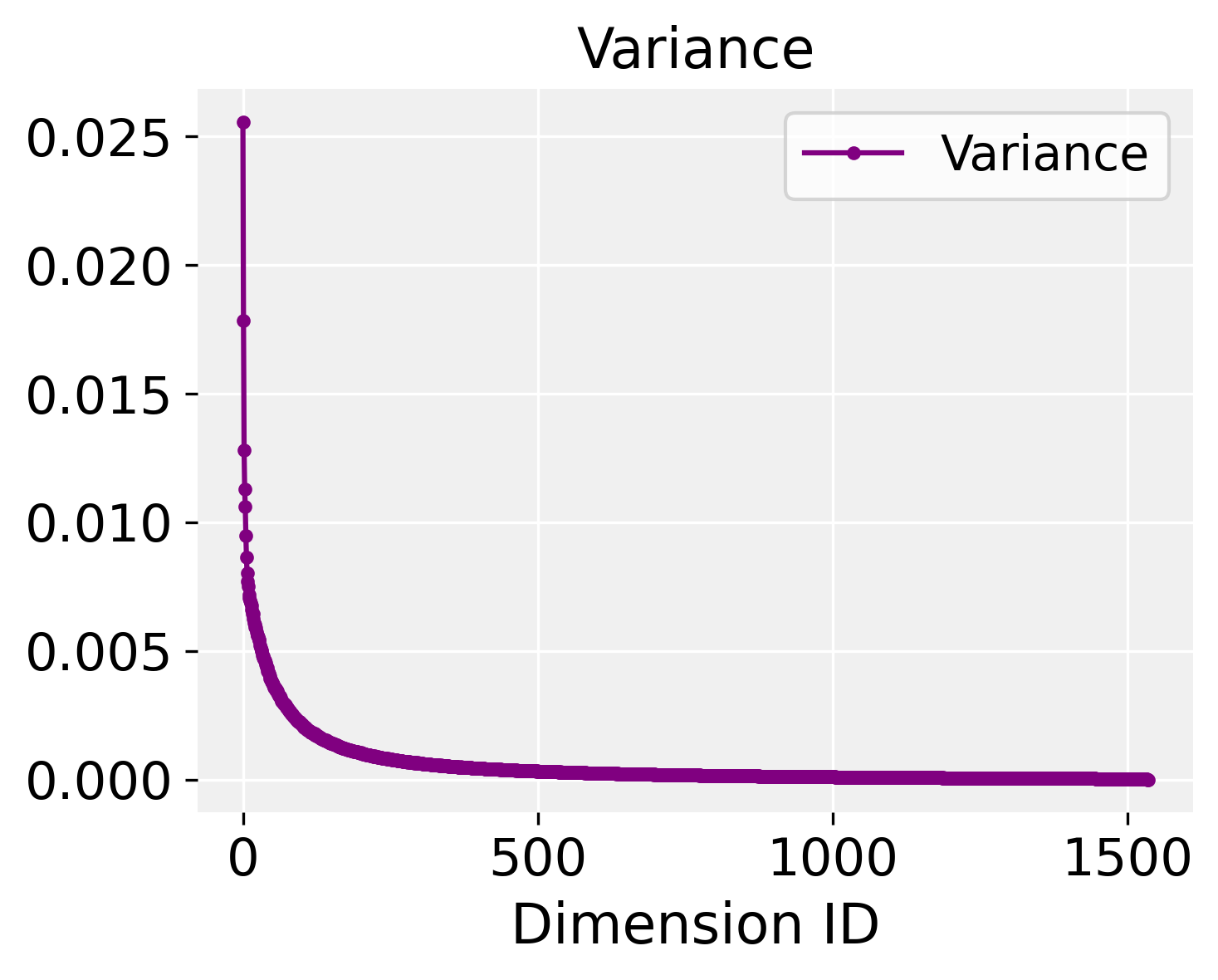}
    \end{minipage}
  }

  \caption{Variance of vector dimension after PCA projection.}
  \Description{figure description}
  \label{fig:pca-longtail}
\end{figure}

\begin{figure}
  \includegraphics[width=0.9\linewidth]{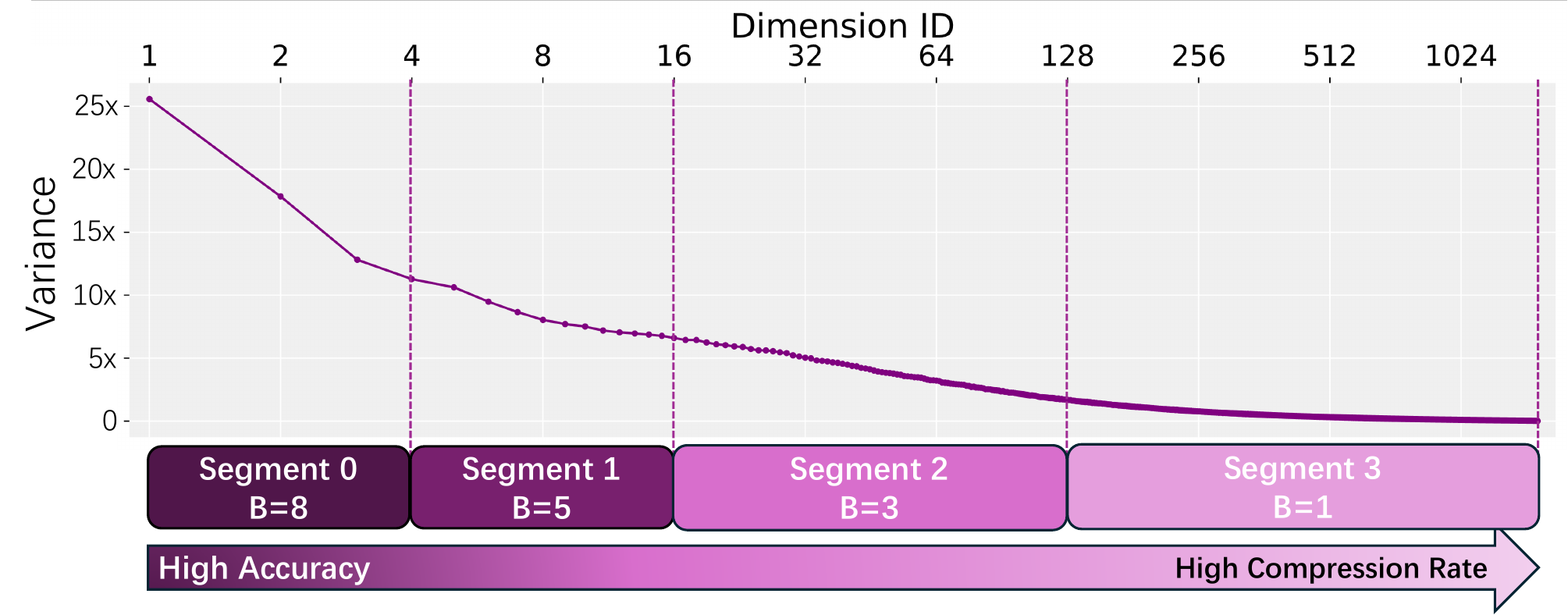}
  \caption{An illustration of dimension segmentation.}
  \Description{figure description}
  \label{fig:segment-fig}
\end{figure}

To define a quantization plan, we first introduce the concept of a dimension segment $\mathit{Seg}$, which is a set of continuous vector dimensions. For a dimension segment $\mathit{Seg}$ and a vector $x\in X$, $x[\mathit{Seg}]$ indicates the projection of $x$ onto the dimensions in $\mathit{Seg}$, and $X[\mathit{Seg}]$ represents the set of all such projected vectors.
We further define a tuple $(\mathit{Seg}, \mathit{B})$ that indicates that SAQ will quantize each dimension in $\mathit{Seg}$ using $\mathit{B}$ bits. A quantization plan is given by a set of such tuples $\mathcal{P} := \{(\mathit{Seg}_1, \mathit{B}_1), (\mathit{Seg}_2, \mathit{B}_2), ...\}$, where $\mathit{Seg}_i$ is a dimension segment and $\mathit{B}_i$ is the number of bits used to quantize the dimensions in $\mathit{Seg}_i$. The union of all these $\mathit{Seg}_i$ is $[D]$. The  total bit consumption of this plan is
\begin{equation}
  \mathcal{C}(\mathcal{P}) = \sum_{(\mathit{Seg}, \mathit{B})\in \mathcal{P}} \mathit{B}\cdot |\mathit{Seg}|.
\end{equation}
\noindent We want to find a quantization plan $\mathcal{P}$ that achieves a low estimation error with $\mathcal{C}(\mathcal{P}) \leq Q_{quota}$. The number of possible quantization plans can be prohibitively large, and it is difficult to predict the estimation error without deploying and testing the plan with real queries. To this end, we provide an efficient way to model the estimation error of a  quantization plan and a method to search for good quantization plan. Note that when quantizing each segment with CAQ, we still apply the random orthonormal projection beforehand such that the dimensions are similar in variance.

\subsection{Quantization Plan Construction}
\label{sec:saq-dp}

According to our experiments and the analysis of RaBitQ \cite{extrbq}, the relative error of quantization a segment (or vector) is proportional to $2^{-B}$, where $B$ is the number of bits used to quantify the segment. This is natural because with 1 more bit, we can reduces the quantization resolution by $1/2$. Moreover, we observe that, after PCA projection, the value distribution of a dimension almost follows a normal distribution with a mean close to $0$, as illustrated in Figure \ref{fig:value-distr}. Therefore, we propose the following expression to model the error introduced 
in the index phase:
\begin{equation}
  \operatorname{ERROR}(\mathit{Seg}, \mathit{B}) = \sum_{i\in \mathit{Seg}} \frac{\mathbb{E}\left[|\mathbf{o}_r[i]\cdot\mathbf{q}_r[i]| \right]}{2^{B+1}} = \frac{1}{ 2^{B}\pi} \sum_{i\in \mathit{Seg}}\sigma_i^2,
  \label{eq:seg-err}
\end{equation}
where $\mathbf{o}_r$ and $\mathbf{q}_r$ are the raw data vector and raw query vector, and $\sigma_i$ means the value variance of the $i$-th dimension\footnote{Here, we assume query vectors share the same distribution with data vectors. In fact, we can drop the normal distribution assumption and simply use the empirical variance of each dimension (i.e., $\sigma_i^2$). This will remove the $\pi$ in Eq~\eqref{eq:seg-err}. }.
The error introduced by a quantization plan $\mathcal{P}$ can then be modeled as
\begin{equation}
  \operatorname{ERROR}(\mathcal{P}) = \sum_{(\mathit{Seg}, \mathit{B}) \in \mathcal{P}} \operatorname{ERROR}(\mathit{Seg}, \mathit{B}).
\end{equation}

The problem then becomes finding a plan under the given bit quota $Q_{quota}$ with minimal $\operatorname{ERROR}(\mathcal{P})$. To solve this problem, we devise a dynamic programming algorithm as presented in Algorithm \ref{alg:construct-quantization-plan}.

\begin{algorithm}[t]
  \caption{Quantization Plan Search}
  \label{alg:construct-quantization-plan}
  \textbf{Input}: vector dimension $D$, total bit quota $Q_{quota}$ \\
  \textbf{Output}: Quantization plan $\mathcal{P}$.

  \BlankLine
  $p_{0, 0} \leftarrow (\emptyset, \emptyset)$

  \For{$d \leftarrow 0$ \KwTo $D-1$}{
    \For{$Q \leftarrow 0$ \KwTo $Q_{quota}$}{
      \If{$p_{d, Q}$ exists}{
        \For{$d' \leftarrow d$ \KwTo $D$}{
          $\mathit{seg} \leftarrow \{d+1, d+2,...,d'\} $ \\

          \For{$\mathit{b}'$ is valid quantization bits}{
            $Q' \leftarrow Q + b'*d'$ \\
            $p' \leftarrow  p_{d, Q} \cup (\mathit{seg}, \mathit{b}')$ \\

            \If{$p_{d', Q'}$ not exist or $\operatorname{ERROR}(p_{d', Q'}) > \operatorname{ERROR}(p')$}{
              $p_{d', Q'} \leftarrow p'$
            }
          }
        }
      }
    }
  }
  $\mathcal{P} \leftarrow \emptyset$

  \For{$Q \leftarrow 0$ \KwTo $Q_{quota}$}{
    \If{$p_{D, Q}$ exists \and $\operatorname{ERROR}(p_{D, Q}) < \operatorname{ERROR}(\mathcal{P})$}{
      $\mathcal{P} \leftarrow p_{D, Q}$
    }
  }
  \Return{$\mathcal{P}$}
\end{algorithm}


Let $p_{d, Q}$ denote the quantization plan that contains the first $d$ dimensions and uses $Q$ bits quota in total. We use dynamic programming to find the optimal quantization plan. Specifically, each time we enumerate an existing quantization plan $p_{d, Q}$ (lines 4-6) and append a new segment $\mathit{seg}$ that contains a set of dimensions $\{d+1, d+2, ..., d'\}$ and use $b'$ bits to quantize each dimension (lines 7-11). We create this new quantization plan or update it if it already exists and the new one is better (lines 12-13). After the whole process, we obtain the optimal quantization plan $\mathcal{P}$ that minimizes the total error within the total bit quota $Q_{quota}$ (lines 15-17).

The time complexity of this dynamic programming is $O(D^2\cdot Q_{quota})$, which is insignificant compared to the time complexity of quantization as it does not loop over vectors. In practice, we set the size of each segment to be a multiple of 64 to align with cache line size. Moreover, as each segment has some extra computation overhead for the distance estimator, which we will shown soon, using many small segments is not efficient. Thus, 
we choose the quantization plan that gives an error close to (e.g., $<$0.1\% of) the minimum but with the least number of segments. On all our experimented datasets, quantization plan search can finish within 1 second.

Figure \ref{fig:segment-fig} illustrates an example of our quantization plan (disregarding the 64 limits for simplicity). We allocate more bits to dimensions with high variance to attain higher precision, while applying higher compression rates to dimensions with low variance to maintain the total bit budget.
After constructing the quantization plan, we use it to split the dimensions of data vectors into segments and quantize each segment separately just as a normal vector.

\revised{
Dimension segmentation relies on a skewed eigen value distribution over the dimensions such that more bits can be used for dimension segments with larger eigen values. When the eigen value distribution is perfectly uniform, the chance for improvement varnishes. However, for these extreme cases, the quantization plan will contain only one segment that contains all dimensions and will still match the performance of CAQ.
}

\begin{figure}[t]
  \subfigure[DEEP]{
    \begin{minipage}[t]{0.46\linewidth}
      \centering
      \includegraphics[width=\linewidth]{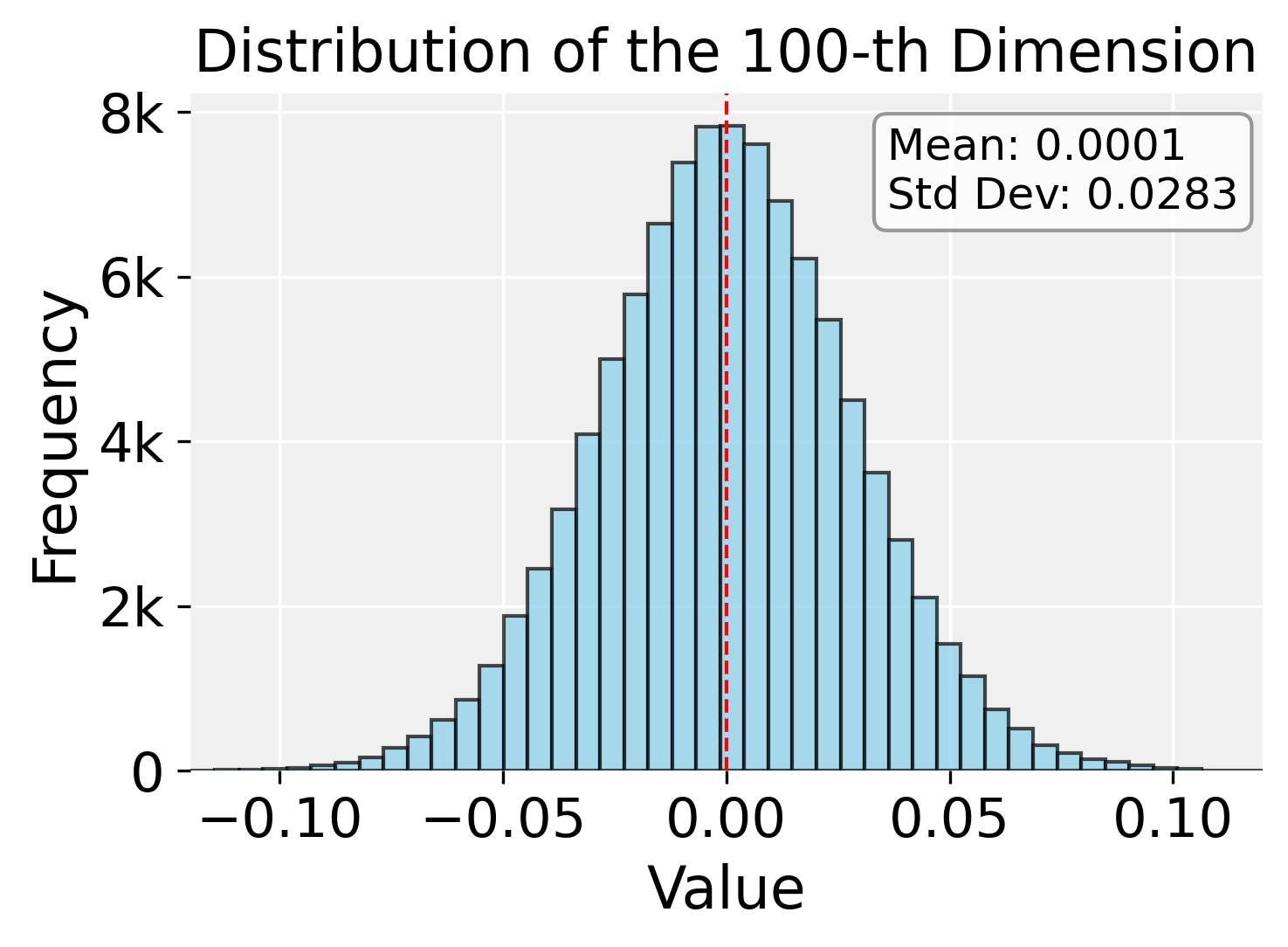}
    \end{minipage}
  }
  \subfigure[OpenAI-1536]{
    \begin{minipage}[t]{0.46\linewidth}
      \centering
      \includegraphics[width=\linewidth]{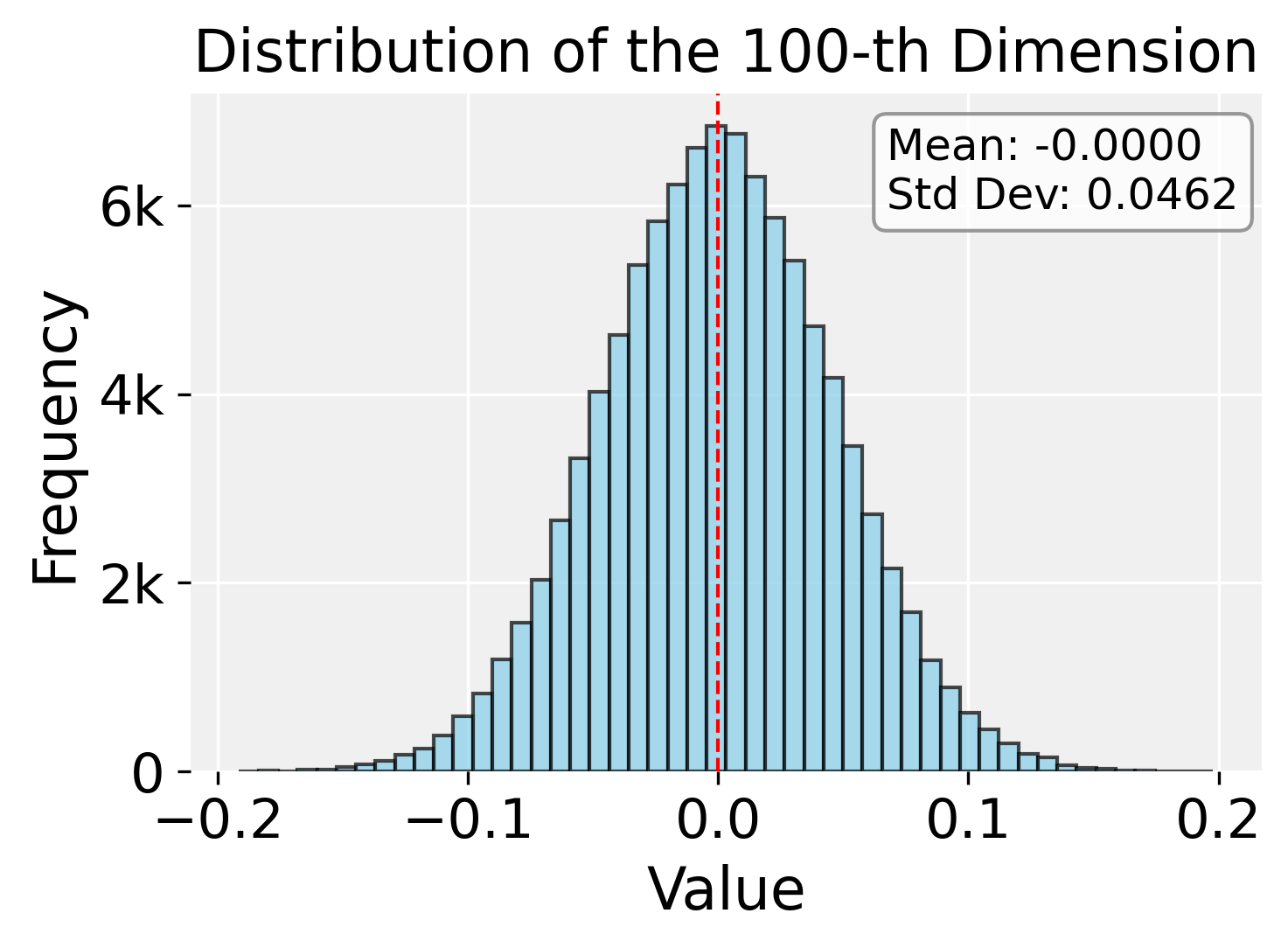}
    \end{minipage}
  }
  \caption{The value distribution at two sampled dimensions  after PCA projection, all vectors are considered.}
  \Description{figure description}
  \label{fig:value-distr}
\end{figure}

\subsection{Multi-Stage Distance Estimation}
\label{sec:saq-estimator}

To process an ANN query, we also split the query vector into segments, estimate the inner product of each segment correspondingly, and combine them to obtain the final distance. Instead of trivially summing up the inner product of each segment, we can also use the feature of the quantization plan to improve the estimator.

Recent studies \cite{gao2023highdimensionalapproximatenearestneighbor} show that reliably identifying the nearest neighbor (NN) does not require exact distance calculations for all candidate vectors. Instead, approximate distance estimates or distance bounds can filter out most unlikely candidates (e.g. if a lower bound exceeds the current best NN distance). In Section \ref{sec:DistEst}, we introduce a progressive distance approximation method that supports progressive refinement for CAQ. For SAQ, we further improve the estimator by leveraging the quantization plan. In particular, the quantization plan partitions the dimensions into segments, and segments with high variance contribute more to the final distance. This inspires us to design a multi-stage estimator that gradually refines the distance estimation by starting with the high-variance segments and gradually adding more segments. Once the distance exceeds the current best NN distance, we can prune the vector.

Moreover, we can even obtain a lower bound of the estimated distance with almost no computation. Specifically, we can predict the distance contribution of each segment with the value distributions of the dimensions in the segment.


For a data vector $\mathbf{o}_r$, the contribution of the dimensions inside segment $\mathit{Seg}$ to the final inner product can be written as
\begin{equation}
  \langle \mathbf{o}_{seg}, \mathbf{q}_{seg} \rangle = \sum_{i \in \mathit{Seg}} \mathbf{q}[i] \cdot \mathbf{o}[i].
\end{equation}
According to our observation, after PCA projection, the value distribution of a dimension almost follows a normal distribution with a mean close to 0, as illustrated in Figure \ref{fig:value-distr}.
The variance of expression above can be written as:
\begin{equation}
  \sigma_{\mathit{Seg}}^2 = Var\left(\langle \mathbf{o}_{seg}, \mathbf{q}_{seg} \rangle \right)
  = \sum_{i \in \mathit{Seg}} \mathbf{q}^2[i] \cdot \sigma_i^2,
\end{equation}
where $\sigma_i^2$ is the variance of dimension $i$ among the dataset.

According to Chebyshev's inequality, we have:
\begin{equation}
  \mathbb{P}\left( \left|\langle \mathbf{o}_{seg}, \mathbf{q}_{seg} \rangle \right| \geq \operatorname{Est}_v(\mathit{Seg}) \right) \leq \frac{1}{m^2},
\end{equation}
where $\operatorname{Est}_v(\mathit{Seg}) = \sigma_{\mathit{Seg}} \cdot m$ and $m$ is a predefined constant. We can use $\operatorname{Est}_v(\mathit{Seg})$ to predict a lower bound of the contribution of each stage to the final inner product with confidence $1-\frac{1}{m^2}$.

It is worth noting that $\sigma_{\mathit{Seg}}$ only needs to be computed once for each query and is shared by all candidate vectors in the query phase. We combine the new estimator $\operatorname{Est}_v(\mathit{Seg})$ with the one in CAQ to conduct a multi-stage estimator, which is a more fine-grained progressive distance refinement provide by SAQ. The multi-stage estimator significantly reduces unnecessary computation and memory access. Our experiments show that the average bit access of the multi-stage estimator is even smaller than the dimension of the dataset, which means that the multi-stage estimator can prune most of the candidates without computing all the segments.

\section{EXPERIMENTAL Evaluation} \label{sec:exp}

\stitle{Baselines}
We compared CAQ and SAQ 
with four other quantization methods listed below.
\squishlist
    \item \textbf{Extended RaBitQ~\cite{extrbq}:} The state-of-the-art quantization method, which quantizes the directions of vectors and stores the lengths. We call it RaBitQ in the experiments.
    \item \textbf{LVQ~\cite{LVQ}:} A recent vector quantization method that achieves good accuracy. LVQ  first collects the minimum value $v_l$ and maximum value $v_r$ of all coordinates for a vector. Then it uniformly divides the range $[v_l, v_r]$ into $2^B - 1$  segments. The floating point value of each coordinate of that vector is rounded to the nearest boundary of the intervals and stored as a $B$-bit integer. 
    \item \textbf{PCA:} We implemented a simple PCA method that first projects all data vectors with a PCA matrix and then quantizes the projected vectors by dropping the insignificant dimensions directly. The dropping rate is equal to the compressing rate.
    \item \textbf{PQ~\cite{5432202}:} A popular quantization method that is generally used with a high compression rate and is widely used in industry. \revised{We use the Faiss PQ implementation~\cite{faiss}. PQ supports two settings $nbits=4$ and $nbits=8$, which is  the number of bits used to represent each sub-vector after quantization. According to ~\cite{extrbq}, $nbits=8$ produces consistently better than $nbits=4$, so we report the result under this setting.}
\squishend

\begin{table}[!t]
  \caption{Datasets.}
  \label{tab:datasets}
  \begin{tabular}{lrrrc}
    \toprule
    \textbf{Dataset} & \textbf{Size} & \textbf{D} & \textbf{Query Size} & \textbf{Type} \\
    \midrule
    DEEP       & 1,000,000  & 256 & 1,000 & Image \\
    GIST        & 1,000,000 & 960 & 1,000 & Image \\
    MSMARC      & 10,000,000 & 1024 & 1,000 & Text \\
    OpenAI-1536 & 999,000   & 1536 & 1,000 & Text \\
    \bottomrule
  \end{tabular}
\end{table}

We do not compare with OPQ~\cite{ge2013optimized}, LSQ~\cite{martinez2016revisiting}, and scalar quantization (SQ) because they are outperformed by RaBitQ. To test the performance of the vector quantization methods for ANNS, we build IVF index~\cite{jegou2010product} for all datasets following RaBitQ. In particular, IVF groups the vectors into clusters and scans the top-ranking clusters for each query. Following the common setting of IVF, we used 4,096 clusters for the datasets.
\revised{
We are aware that proximity graph indexes~\cite{HNSW, NSG} are also popular, but combining with vector indexes is not our focus and we leave it to future work.
All methods were optimized with the SIMD instructions with AVX512 to ensure a fair efficiency comparison, and the performance improvements of SAQ generalize across platforms.}

\begin{figure*}[t]
  \centering
  \includegraphics[width=0.95\linewidth]{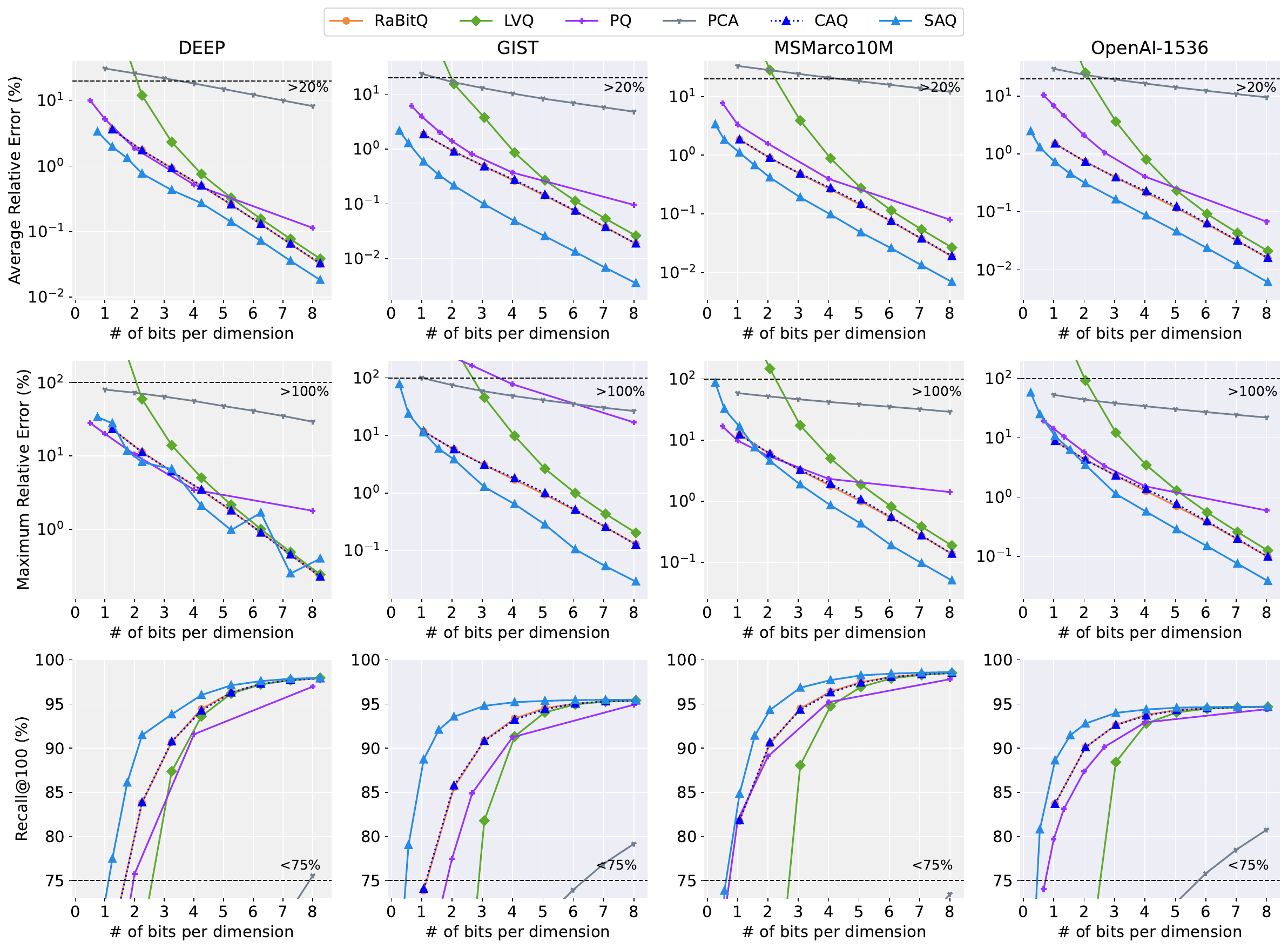}
  \caption{Quantization accuracy of SAQ and the baselines \revised{($nprob=200$)}. }
  \label{fig:error}
\end{figure*}

\stitle{Datasets}
We used four public real-world datasets that have various dimensionalities and data types, as shown in Table \ref{tab:datasets}. These datasets include widely adopted benchmarks for the evaluation of ANN
algorithms\cite{AUMULLER2020101374, 8681160, rbq, extrbq, wangweipca} (DEEP and GIST\footnote{https://www.cse.cuhk.edu.hk/systems/hash/gqr/datasets.html}) and embeddings generated by language models.
The MSMARCO\footnote{https://huggingface.co/datasets/Cohere/msmarco-v2.1-embed-english-v3} dataset contains the embeddings for the TREC-RAG Corpus 2024~\cite{REC-RAG} embedded with the Cohere Embed V3 English model~\cite{CohereV3} and the OpenAI-1536\footnote{https://huggingface.co/datasets/Qdrant/dbpedia-entities-openai3-text-embedding-3-large-1536-1M} dataset is produced
by model \textit{text-embedding-3-large}
of OpenAI~\cite{OpenAI}. We used the query sets provided by the DEEP and GIST datasets. For MSMARCO and OpenAI-1536, we randomly removed 1,000
vectors from each dataset and used them as query vectors.

\begin{table}[]
  \caption{
    \revised{Average relative error at $B\!=\!4$.  SAQ reports the error, while other methods report the error blowup w.r.t. SAQ.} }
  \label{tab:error_comparison}
  \centering
  \small 
  \begin{tabular}{lcccc}
    \toprule
    Method & DEEP & GIST & MSMARC & OpenAI-1536 \\
    \midrule
    SAQ & $0.27\%$ & $0.05\%$ & $0.10\%$ & $0.09\%$ \\
    \midrule
    CAQ (ratio of SAQ)    & $1.9\times$ & $5.6\times$ & $2.8\times$ & $2.6\times$  \\
    RaBitQ (ratio of SAQ) & $1.8\times$ & $5.4\times$ & $2.7\times$ & $2.5\times$  \\
    LVQ (ratio of SAQ)    & $2.8\times$ & $17.8\times$ & $9.1\times$ & $9.3\times$ \\
    PQ  (ratio of SAQ)    & $1.9\times$ & $7.7\times$ & $4.1\times$ & $4.7\times$  \\
    \bottomrule
  \end{tabular}
  \begin{flushleft}
  \end{flushleft}
\end{table}

\stitle{Performance metrics and machine settings}
For the experiments that evaluate the quantization accuracy under different space quota, we measured the accuracy of the estimator in terms of average relative error, maximum relative error, and recall. The relative error is defined as 
$|d^2_{\text{est}} - d^2_{\text{real}}|/d^2_{\text{real}}$, where $d^2_{\text{est}}$ and $d^2_{\text{real}}$ are the estimated and real squared Euclidean distance, receptively. Recall is the percentage of true nearest neighbors successfully retrieved with top $k=100$ and $nprob=200$.
These metrics are widely adopted to measure the accuracy of ANN algorithms~\cite{AUMULLER2020101374, 8681160, 4492921, 10.1145/2213836.2213898, 10.14778/2850469.2850470, 10.1145/3654990}. All metrics were measured on every query and averaged across the entire query set. The compression rate was measured by the number of bits used to quantize a single dimension, which is computed by dividing the total number of bits used to quantize all the dimensions by the number of dimensions. 

\begin{table*}[t]
  \caption{
    \revised{Quantization time  (in seconds) of different methods. \textit{Speedup} is the speedup of SAQ over RaBitQ.} }
  \label{tab:quant_time}
  \centering
  \small
  \begin{tabular}{@{}l *{16}{c}@{}}
    \toprule
    \multirow{2}{*}{Method}
    & \multicolumn{4}{c}{DEEP}
    & \multicolumn{4}{c}{GIST}
    & \multicolumn{4}{c}{MSMARCO}
    & \multicolumn{4}{c}{OpenAI-1536} \\
    \cmidrule(lr){2-5} \cmidrule(lr){6-9} \cmidrule(lr){10-13} \cmidrule(lr){14-17}
    & $B=1$ & $B=4$ & $B=8$ & $B=9$
    & $B=1$ & $B=4$ & $B=8$ & $B=9$
    & $B=1$ & $B=4$ & $B=8$ & $B=9$
    & $B=1$ & $B=4$ & $B=8$ & $B=9$ \\
    \midrule
    RaBitQ        & 0.3 & 3.9 & 21.5 & 41.7 & 2.4 & 16.9 & 83.4 & 165.9 & 33.8 & 178.9 & 897.4 & 1773.6 & 6.1 & 28.2 & 139.7 & 269.0
 \\
    LVQ           & 0.3 & 0.3 & 0.3 & 0.4 & 0.8 & 1.1 & 1.3 & 1.4 & 12.6 & 14.8 & 18.5 & 17.9 & 1.5 & 1.8 & 2.3 & 2.3
\\
    PQ*           & 6.9 & 18.0 & 30.8 & - & 25.1 & 66.3 & 114.1 & - & 144.6 & 256.6 & 396.6 & - & 42.6 & 107.8 & 184.8 & -
\\
    \midrule
    CAQ  & 0.6 & 1.1 & 0.7 & 0.7 & 2.5 & 5.2 & 3.5 & 2.9 & 28.4 & 60.2 & 32.8 & 31.3 & 6.3 & 10.5 & 6.8 & 7.0
 \\
    SAQ  & 0.3 & 0.7 & 0.7 & 0.7 & 2.2 & 3.5 & 2.1 & 2.0 & 27.7 & 38.3 & 26.2 & 26.1 & 3.5 & 7.5 & 3.9 & 3.7
    \\
    \midrule
     \textbf{Speedup} & \textbf{0.9$\times$} & \textbf{5.9$\times$} & \textbf{32.2$\times$} & \textbf{59.1$\times$} & \textbf{1.1$\times$} & \textbf{4.8$\times$} & \textbf{39.1$\times$} & \textbf{84.7$\times$} & \textbf{1.2$\times$} & \textbf{4.7$\times$} & \textbf{34.2$\times$} & \textbf{67.9$\times$} & \textbf{1.7$\times$} & \textbf{3.8$\times$} & \textbf{35.4$\times$} & \textbf{73.1$\times$}   \\
    \bottomrule
  \end{tabular}
  \begin{flushleft}
    \footnotesize \textsuperscript{*}: The results are obtained at the same compression rate. `-' means do not support.
  \end{flushleft}
\end{table*}

For the experiments that evaluate indexing, we measured the index time of different methods using 24 threads. For the experiments that evaluate the performance of ANNS, we measured QPS, i.e., the number of queries processed per second, under different average distance ratios and recalls to show the performance of each method with different compression rates. The average distance ratio is the average of the distance ratios of the retrieved nearest neighbors over the true nearest neighbors. These metrics are widely adopted to measure the performance of ANN algorithms~\cite{AUMULLER2020101374, 10.1145/2213836.2213898, Yusuke-Matsui2018, extrbq}. 

All the experiments were run on a server with 4 Intel Xeon Gold 6252N@ 2.30GHz CPUs (with 24 cores/48 threads each), 756GB RAM. The C++ source code is compiled by GCC 11.4.0 with -Ofast -march=native under Ubuntu 22.04 LTS container with AVX512 enabled. All the results, including indexing and search, were computed using 24 threads bound to 24 physical cores for all methods.

\subsection{Main Results}

\stitle{Quantization accuracy} \revised{In this experiment, we evaluate the accuracy of different quantization methods at different compression rates with fixed $nprob=200$.}  
We focus on both high compression rate ($B=0.2$, compressing $160\times$ approximately) and moderate compression rate ($B=8$, compressing $4\times$ approximately). Note that PQ's compression rate is not controlled by $B$ as in the other methods, but we obtained a very close compression rate for PQ as the other methods for each $B$ in all our experiments.

As reported in Figure~\ref{fig:error}, SAQ (the blue curve) achieves consistently better average relative error and recall than all baseline methods at the same compression rate. The maximum relative error of SAQ is relatively less stable in low-dimension datasets (e.g., DEEP) due to the limitation of SIMD where the granularity of segmentation is  at least 64, which leads to a suboptimal quantization plan. Nevertheless, as the dimension increases, this limitation is hidden and both the average and  maximum relative errors of  SAQ  decrease exponentially with the number of bits. Additionally, even our base method CAQ also consistently achieves almost the same performance as RaBitQ in this experiment, which shows that our new efficient quantization method does not compromise accuracy.

At higher compression rates ($B\leq 4$), the accuracy of the LVQ worsens dramatically as the compression rate increases. Table~\ref{tab:error_comparison} reports the error comparison of different methods at $B=4$. It is worth noting that as shown in Figure~\ref{fig:error},  SAQ can arealdy achieve a high recall of  $>95\%$ at $B=4$, and in some case like for MSMARCO, even $B=2$ is good enough.

While the maximum compression rate of RaBitQ and LVQ is around $32\times$ (at $B=1$),  SAQ and also PQ can achieve higher compression rates with reasonable accuracy. However, for high compression rates, e.g., when $B=0.5$ ($\sim 64\times$ compression), the average relative error of SAQ is significantly better than that of PQ (e.g., $4.8\times$ lower for GIST), and impressively, is even consistently lower than the error of the state-of-the-art RaBitQ obtained at a much lower compression rate when $B=1$ ($\sim 32\times$ compression).  In a more extreme case $B=0.2$ ($\sim 160\times$ compression), SAQ still achieves better accuracy than PQ. At lower compression rates ($B> 4$), SAQ also consistently achieves the lowest average and maximum relative errors and recalls compared to baselines. The average relative error is $2.5\times$ to $5\times$ lower than that of RaBitQ in high-dimensional datasets.

\stitle{Quantization efficiency}
In this experiment, we evaluate the quantization efficiency of different methods. The quantization time includes generating the random orthogonal matrix and applying the rotation to all the data vectors, but excludes the time for applying the PCA projection. Since the PCA projection and the random rotation can be combined into one matrix, this will not change the time complexity.
As reported in Table~\ref{tab:quant_time}, the quantization time of RaBitQ increases exponentially with $B$. When $B=9$, RaBitQ already costs 11.8 CPU hours (i.e., 0.49 hours with 24 threads) to quantize MSMARCO, which contain 10M vectors with 1,024 dimensions. 
In contrast, the quantization time of CAQ and SAQ fluctuates only within a manageable time range, similar to that of LVQ. The speedup of SAQ over RaBitQ can be over $80\times$ at $B=9$.

\begin{figure*}[]
  \centering
  \includegraphics[width=0.95\linewidth]{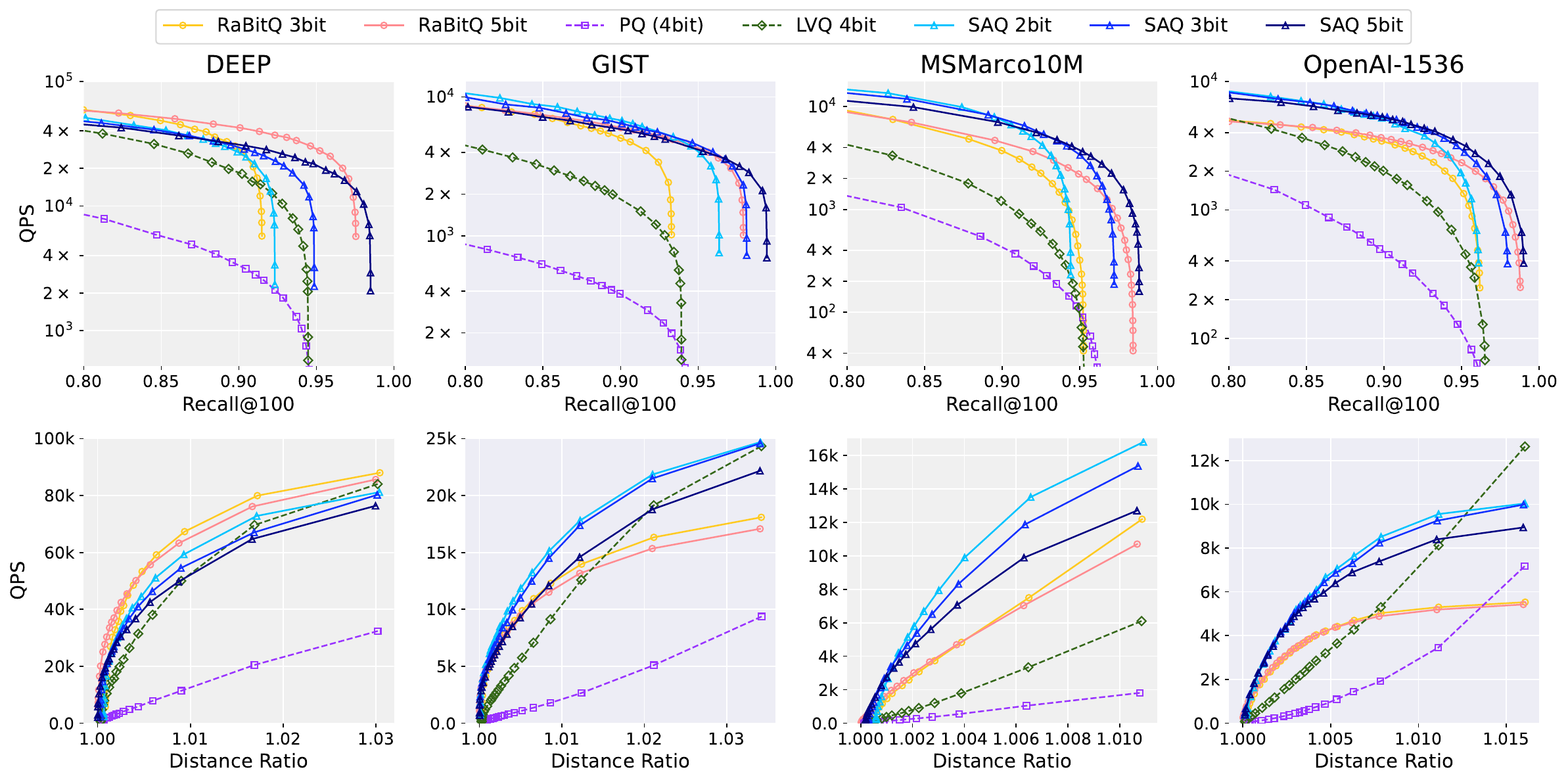}
  \caption{\revised{The performance of SAQ and the baselines for ANNS (higher QPS/recall and lower average distance ratio are better).}}
  \Description{figure description}
  \label{fig:qps}
\end{figure*}

\stitle{ANNS performance}
In this experiment, we evaluate the performance of CAQ, SAQ, and baselines to process ANNS queries. Figure \ref{fig:qps} reports the QPS-Recall curves (upper right, i.e., higher QPS/recall, is better) and the QPS-Ratio curves (upper left, i.e., higher QPS and lower average distance ratio, is better) for both methods. Using the IVF index, as the number of clusters to probe increases, QPS decreases as computation increases, but the probability of finding the ground truth increases. A lower compression rate can produce a more accurate estimated distance to reach a higher recall, but requires more computation that slows down the distance estimator. We set $m=2$ for the multi-stage estimator of SAQ in GIST and $m=4$ in other datasets. \revised{We report the results of SAQ for $B=\{2,3,5\}$, RaBitQ for $B=\{3,5\}$ (it does not support $B=2$), LVQ for $B=4$ and PQ for the compression rate $8\times$ (equivalent to $4$bit).}

\revised{
When the number of clusters to probe is low, LVQ can achieve high QPS, especially for the high-dimensional datasets, since it does not require the rotation of the query vectors. However, as the number of clusters to probe increases, the QPS of LVQ drops significantly due to the high computation cost of a single distance estimation and can not reach the recall as $3$bit RaBitQ and SAQ, even $2$bit SAQ.
}

\begin{table}[t]
  \caption{QPS at $95\%$ recall for SAQ and RaBitQ. `-' means cannot reach $95\%$ recall, best QPS in bold for each method.}
  \label{tab:qps}
  \centering
  \small
  \begin{tabular}{@{}l *{8}{c}@{}}
    \toprule
    \multirow{2}{*}{Method}
    & \multicolumn{2}{c}{DEEP}
    & \multicolumn{3}{c}{GIST}\\
    \cmidrule(lr){2-3} \cmidrule(lr){4-6}
     & $B=3$ & $B=5$
     & $B=2$ & $B=3$ & $B=5$ \\
    \midrule
    RaBitQ & -- &  \textbf{27721}     & -- & -- & \textbf{4018}
    \\
    SAQ  &   --\textit{*} & \textbf{19407}    & 3915 & \textbf{4683} & 4073
    \\

    \bottomrule
    \toprule

    \multirow{2}{*}{Method}
    & \multicolumn{2}{c}{MSMARCO}
    & \multicolumn{3}{c}{OpenAI-1536} \\
    \cmidrule(lr){2-3} \cmidrule(lr){4-6}
    & $B=3$ & $B=5$
    & $B=2$ & $B=3$ & $B=5$  \\
    \midrule
    RaBitQ   & 235 & \textbf{1958}   & --  & 1338 & \textbf{2329}
    \\
    SAQ      & 3427 & \textbf{3726}   & 1624 & 2812 & \textbf{3112}
    \\
    \bottomrule
  \end{tabular}

  \begin{flushleft}
  \footnotesize \textit{*} The maximum recall of SAQ in DEEP is 94.86\% when $B=3$.
  \end{flushleft}
\end{table}

\revised{
For SAQ, in the low-dimensional dataset (DEEP), it can achieve a higher recall than RaBitQ at the same compression rate.
In fact, $2$bit SAQ attains a higher recall than $3$bit RaBitQ.} However, the QPS of SAQ is hindered by the additional computation of the multi-stage estimator. This is because for low-dimensional data, the computation of an individual estimator is minor and the effect of the early pruning of candidates is negligible, while the constant overhead of the multi-stage estimator is more significant.

The multi-stage estimator is more suitable for high-dimensional datasets, where the computation of a single estimator is heavier and the overhead of the multi-stage estimator becomes minor. Early pruning significantly reduces computation and memory access.
Thus, for high-dimension datasets, SAQ consistently outperforms RaBitQ in terms of QPS, recall, and average distance ratio at the same compression rate. Even at $2$bit SAQ can still almost outperform $3$bit RabitQ and produces $\sim 95\%$ recall.

In Table \ref{tab:qps}, we also report the QPS for different datasets and configurations when 95\% recall is achieved. For GIST and OpenAI-1536, SAQ can achieve $>95\%$ recall and reasonable QPS with $B=2$. For MSMARCO, SAQ can also produce a maximum $94.4\%$ recall with $B=2$. Thus, $B=3$ is sufficient for SAQ to produce 95\% recall for these high-dimensional datasets, and $B=5$ almost gives the best QPS. We notice that for MSMARCO, the QPS of SAQ is $14.6\times$ that of RaBitQ  when $B=3$. This significant difference occurs because RaBitQ approaches its maximum recall at $B=3$ due to error limitations, causing a sharp decline in its QPS, while SAQ maintains high performance with room for further recall improvement.

\subsection{Micro Results}

\label{sec:micro_results}

\stitle{Space consumption}
We first report the space consumption of SAQ and RaBitQ with different configurations $B$. The space includes the quantization code, two  extra factors per data vector and other factors that consume constant space. We only report the results from the MSMARCO dataset under a few configurations of $B$. The space consumption is proportional to the number of data vectors and the number of dimensions for the other datasets, and also proportional to other configurations of  $B$.

As reported in Table\ref{tab:quant_space}, the space consumption is almost proportional to $B$. We note that, due to the constant overhead of some factors, the compression rate may not be as expected in small $B$. The space consumption of SAQ is slightly larger than the baseline since the multi-stage estimator of SAQ requires to store more statistics of the dataset and consume more space for extra factors. However, this overhead is negligible for large-scale datasets.

\begin{table}[]
  \caption{Storage space of the quantized vectors for MSMARCO. `-': the configuration not supported by the method.}
  \label{tab:quant_space}
  \centering
  \small
  \begin{tabular}{l*{10}{r}}
  \toprule
  \textbf{B}  & 0.5  & 1  & 2  & 4  & 6   & 8  \\
  \midrule
  RaBitQ (MB)    & -- & 1363 & -- & 5025 & -- & 9908
   \\
  CAQ (MB)    & -- & 1379 & 2600 & 5041 & 7483 & 9924
   \\
  SAQ (MB)    & 838 & 1449 & 2671 & 5114 & 7481 & 9922
  \\
  \bottomrule
  \end{tabular}
  \begin{flushleft}
    \footnotesize Raw data vector space consumption is 39,100 MB.
  \end{flushleft}
\end{table}

\begin{figure}[!]
  \centering
  \includegraphics[width=\linewidth]{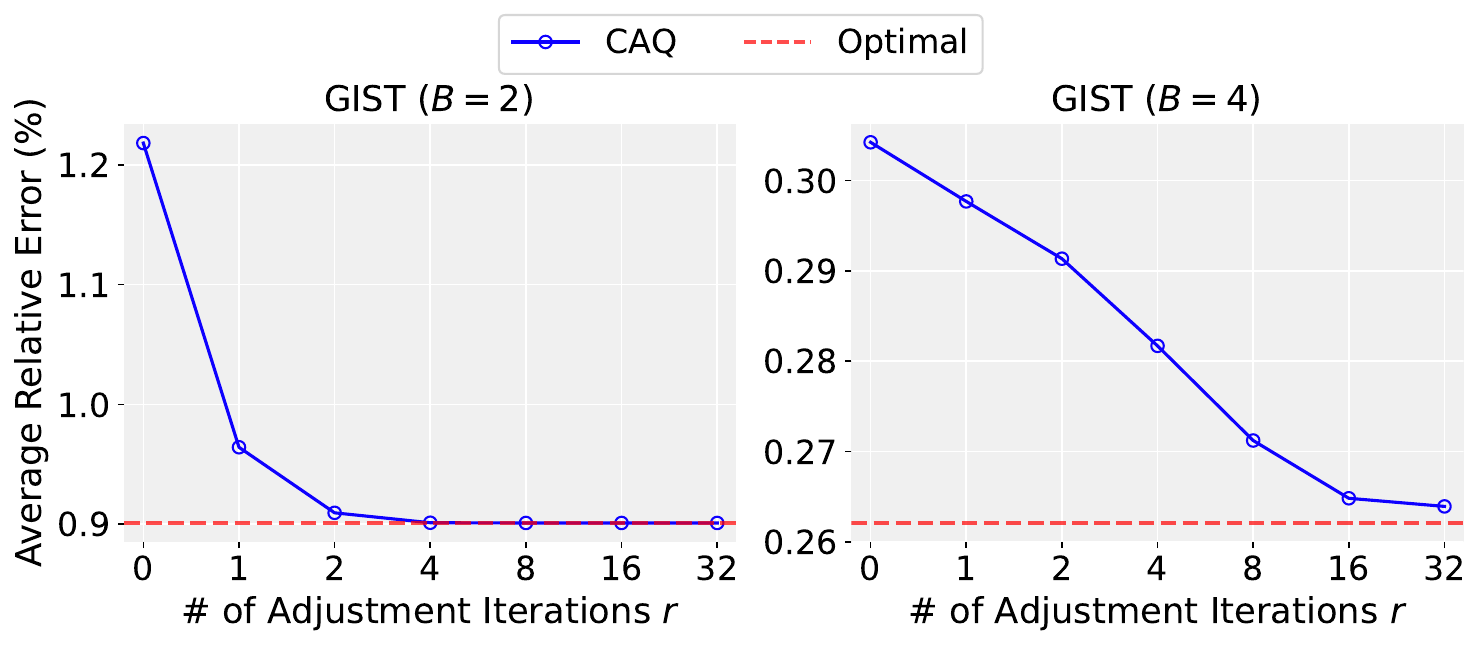}
  \caption{\revised{Quantization accuracy with different number of code adjustment iterations $r$. \textit{Optimal} means the error produced by the RabitQ quantization code, which is optimal.}}
  \label{fig:code-adj-acc}
\end{figure}

\stitle{Code adjustment iteration}
\revised{
We compare the quantization accuracy with different number of code adjustment iterations $r$ in Algorithm~\ref{alg:find-adjusted-quantization}. We measure accuracy by the average relative error, which is the same as in our main results.}

\revised{
As reported in Figure~\ref{fig:code-adj-acc}, without any code adjustment ($r=0$), the average relative error is the worst. As $r$ increases, the average relative errors become significantly better and almost optimal (the error is only $0.7\%$ worse than the optimal when $B=4, r=32$). The results show that the first few rounds of iterations can significantly refine the quantization code and produce a good enough error while increasing $r$ to a higher value only slightly improves accuracy. Thus, we recommend setting $r\in[4, 8]$ for CAQ, which can achieve a good balance between accuracy and efficiency.}

\stitle{Memory access for multi-stage estimator} We also quantify the memory access and computational costs of the multi-stage estimator of SAQ. \revised{We measure the average bits accessed by the multi-stage estimator when processing an ANNS query, which is the number of bits of quantization codes required by the estimator for a single distance estimation.} As defined in Section \ref{sec:saq-estimator}, the estimator confidence level is governed by the parameter $m$ in $\operatorname{Est}_v(\mathit{Seg}) = \sigma_{\mathit{Seg}} \cdot m$. As $m$ increases, the confidence in the lower bound distance produced by the estimator increases, though it is likely that  candidates are pruned. When setting $m$ to a small value,  pruning becomes radical and it may weaken the recall.

As reported in Figure \ref{fig:multi-stage-est}, the average number of bits accessed by a query is always larger than or equal to the number of dimensions of the dataset. This is because RaBitQ always estimates the distance with the most significant bit of the quantization code. RaBitQ also uses this distance to prune candidates, and so the bits accessed increases slowly as $B$ (i.e., the number of bits for a dimension) increases.
For SAQ's multi-stage estimator, the memory access is consistently reduced as $m$ decreases, which means that candidates are pruned earlier before accessing more bits and computing a more accurate distance.
We also observe that the multi-stage estimator almost does not harm the recall when $m\ge4$, but the average of bits accessed increases correspondingly for larger $m$.  When $m=2$, although the bits accessed decreases dramatically, the recall is also lower by the more radical pruning. Thus, we recommend setting $m=4$ as default for the multi-stage estimator in SAQ, which can significantly reduce bits accessed, while avoiding weakening recall. SAQ's multi-stage estimator using default $m$ can reduce the bits accessed by about 1.9-4.0$\times$ compared with the estimator of RaBitQ.

\begin{figure}[!]
  \centering
  \includegraphics[width=\linewidth]{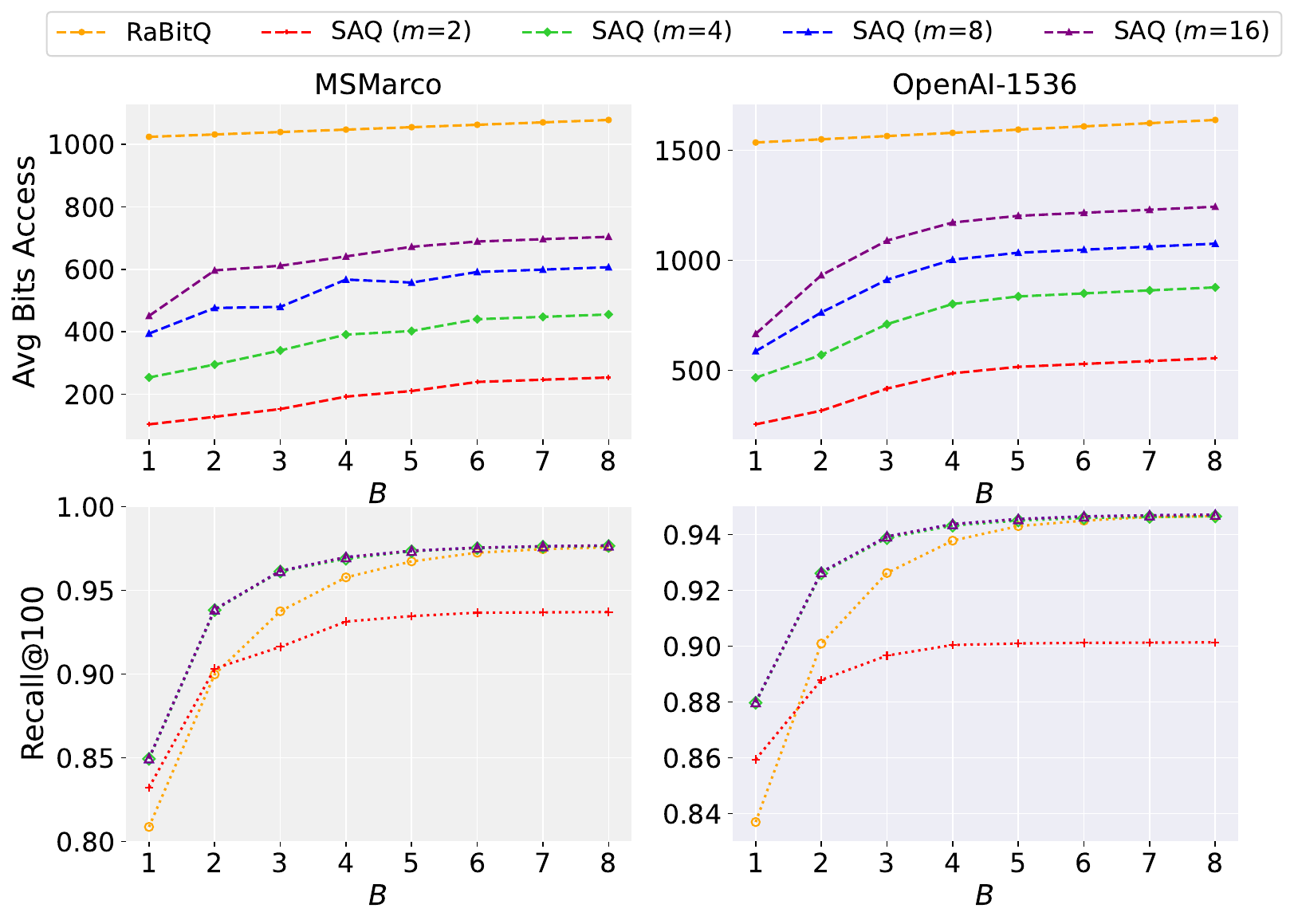}
  \caption{\revised{The average number of bits accessed for each vector (top, smaller the better) and query recall (bottom, $nprob=200$) for the estimator of RaBitQ and multi-stage estimator of SAQ with different $m$ under different $B$ (the recall curves of $m=\{4, 8, 16\}$ are almost identical).}}
  \Description{figure description}
  \label{fig:multi-stage-est}
\end{figure}

\begin{figure}[!]
  \centering
  \includegraphics[width=\linewidth]{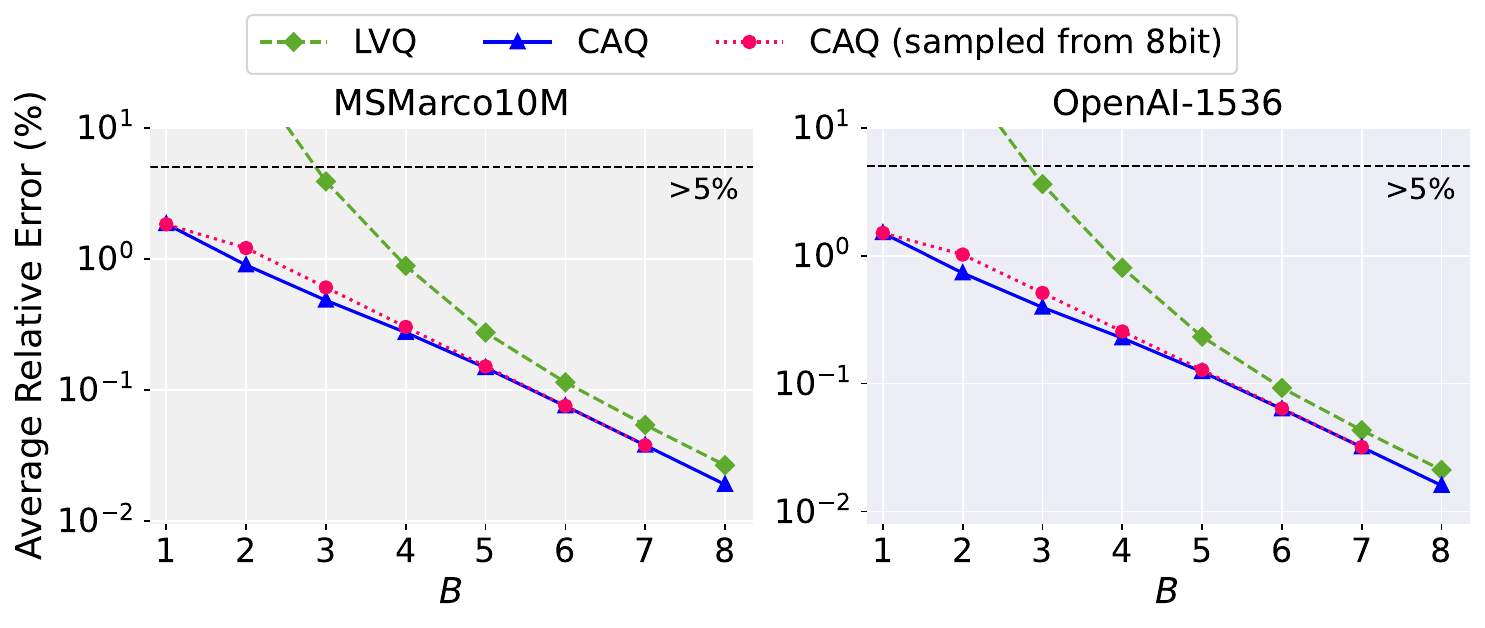}
  \caption{Progressive
  distance approximation: $b$ bits sampled from $8$-bit CAQ and compared with native $b$-bit CAQ.}
  \Description{figure description}
  \label{fig:progressive-error}
\end{figure}

\stitle{Accuracy of progressive distance approximation}
\label{sec:progressive-error}
We compare the relative error produced by the progressive distance (red curve) with the error of the native distance of CAQ (blue curve) and LVQ (green curve). The quantization code for progressive distance is sampled from the native 8-bit quantization code to different $b$ bits.


As reported in Figure \ref{fig:progressive-error}, the error of the sampled quantization code is consistently lower than that of LVQ and is almost the same as that of the native quantization code when $b > 4$ and the error is slightly larger  when $2 \le b \le 4$.  The small increase in error when reusing the factor under small $b \le 4$ because the $b$-bit code is sampled from the 8-bit quantization code and the estimator factor is optimized to fit the full 8-bit code, which enlarges the gap between the reuse factor and the native factor when more bits of code are dropped. However, when $b=1$, the error of the sampled quantization code and the native one is almost the same, since the estimator factor is highly concentrated around 0.8 when $b=1$~\cite{rbq} and the estimator does not rely on the factor produced by 8-bit quantization.


\section{Related Work}

\stitle{Clustering-based vector quantization} Product quantization (PQ) is a classic method that divides the $D$-dimensional vector space into $M$ sub-spaces with dimension $D/M$ and uses K-means clustering to obtain $K$ vector codewords for each sub-space~\cite{babenko2014additive, jegou2010product, martinez2018lsq++, wang2017survey}. The codewords for each sub-space forms a codebook, and a PQ has $M$ codebooks. Optimized PQ (OPQ) improves PQ by applying a rotation matrix $R$ beforehand to evenly distribute the energy among the $M$ sub-spaces~\cite{ge2013optimized}. Locally optimized PQ (LOPQ) works with the IVF index and trains a set of codebooks for each cluster of vectors to better adapt to local data distribution~\cite{kalantidis2014locally}. Instead of partitioning a vector dimension as in PQ, residue quantization (RQ) uses codebooks that are in the same $D$-dimensional  space of the original vectors~\cite{babenko2014additive, jegou2010product, martinez2018lsq++, wang2017survey}. The codebooks are trained sequentially with K-means with each codebook working on the residues from all its previous codebooks. To improve accuracy, additive quantization (AQ) improves RQ by jointly learning the $M$ $D$-dimensional codebooks~\cite{babenko2014additive}. However, the complexity is high for both learning the codebooks and encoding each vector with the codebooks. To speed up AQ, LSQ~\cite{martinez2016revisiting} and LSQ++~\cite{martinez2018lsq++} conduct both algorithm and system optimizations by reformulating the codebook learning problem and using GPU for acceleration. There are also vector quantization methods with other codebook structures, e.g., composite quantization (CQ)~\cite{wang2018composite} and tree quantization (TreeQ)~\cite{foote2003treeq}.

Clustering based methods use lookup tables (LUTs) for efficient approximate distance computation~\cite{babenko2014additive, jegou2010product, martinez2018lsq++, wang2017survey}. A query first computes its distances to all the codewords to initialize the LUTs, and to compute the approximate distance with a vector $x$, the distances of $x$'s codewords are aggregated over the codebooks. Typically, a codebook size of $K=256$ is used such that the index of each codeword can be stored as a byte. However, it is observed that table lookup does not utilize CPU registers efficiently with $K=256$~\cite{andre2016cache}, and thus it is proposed to use smaller codebooks (i.e., $K=16$) to fit each LUT into the registers~\cite{sun2023soar}. However, at the same space quota for each vector, smaller codebooks degrade accuracy.

\stitle{Projection-based vector quantization} It is well known that the energy of high-dimension vectors concentrate on the leading dimensions after projection with the PCA matrix. As such, FEXIPRO~\cite{li2017fexipro} adopts the leading dimensions to compute similarity bounds for efficient pruning in maximum inner product search (MIPS). DADE uses PCA-based dimension reduction for Euclidean distance and derives probabilistic bounds for the approximate distance computed with the leading dimensions~\cite{deng2024efficient}. A progressive distance computation method is introduced, where a vector first uses its leading dimensions for computation and then checks whether its distance to query is smaller than the current top-$k$ with high probability, and exact distance is only computed if the check passes. In a similar vein, ADSampling projects the vectors with a random orthonormal projection matrix and shows that a unbiased distance estimation can be obtained by sampling some dimensions~\cite{deng2024efficient}. LeanVec learns the projection matrix for dimension reduction in a query-aware manner~\cite{tepper2023leanvec}, while GleanVec~\cite{tepper2024gleanvec} improves LeanVec by learning multiple projection matrices to adapt to local data distributions like LOPQ. Different from the dimension reduction methods, RabitQ uses the random orthonormal projection matrix such that vector dimensions have similar magnitude and quantizes each dimension with 1-bit~\cite{rbq}. To use more bits for each dimension, RabitQ pads the original vector with zeros to increase the dimension before projection. Observing that padding yields  inferior accuracy, extended RabitQ (E-RabitQ) uses codewords on the unit sphere after projection and is shown to be asymptotically optimal in accuracy~\cite{extrbq}. Currently, E-RabitQ represents the state-of-the-art in accuracy. As we demonstrated in Section~\ref{sec:exp}, SAQ significantly improves E-RabitQ in terms of both encoding speed and quantization  accuracy.

\section{Conclusions}


We have presented SAQ, an advanced vector quantization method that addresses critical limitations of existing methods in both encoding efficiency and quantization accuracy through dimension segmentation and code adjustment. Extensive experiments demonstrate that SAQ significantly outperforms state-of-the-art methods, achieving up to 80\% lower quantization error and 80× faster encoding speeds compared to Extended RabitQ. These advancements position SAQ as a robust solution for high-accuracy, efficient vector quantization in large-scale ANNS applications.

\bibliographystyle{ACM-Reference-Format}
\bibliography{main}

@article{rbq,
  author = {Gao, Jianyang and Long, Cheng},
  title = {RaBitQ: Quantizing High-Dimensional Vectors with a Theoretical Error Bound for Approximate Nearest Neighbor Search},
  year = {2024},
  issue_date = {June 2024},
  publisher = {Association for Computing Machinery},
  address = {New York, NY, USA},
  volume = {2},
  number = {3},
  url = {https://doi.org/10.1145/3654970},
  doi = {10.1145/3654970},
  abstract = {Searching for approximate nearest neighbors (ANN) in the high-dimensional Euclidean space is a pivotal problem. Recently, with the help of fast SIMD-based implementations, Product Quantization (PQ) and its variants can often efficiently and accurately estimate the distances between the vectors and have achieved great success in the in-memory ANN search. Despite their empirical success, we note that these methods do not have a theoretical error bound and are observed to fail disastrously on some real-world datasets. Motivated by this, we propose a new randomized quantization method named RaBitQ, which quantizes D-dimensional vectors into D-bit strings. RaBitQ guarantees a sharp theoretical error bound and provides good empirical accuracy at the same time. In addition, we introduce efficient implementations of RaBitQ, supporting to estimate the distances with bitwise operations or SIMD-based operations. Extensive experiments on real-world datasets confirm that (1) our method outperforms PQ and its variants in terms of accuracy-efficiency trade-off by a clear margin and (2) its empirical performance is well-aligned with our theoretical analysis.},
  journal = {Proc. ACM Manag. Data},
  month = may,
  articleno = {167},
  numpages = {27},
  keywords = {Johnson-Lindenstrauss transformation, approximate nearest neighbor search, quantization}
}

@misc{extrbq,
      title={Practical and Asymptotically Optimal Quantization of High-Dimensional Vectors in Euclidean Space for Approximate Nearest Neighbor Search},
      author={Jianyang Gao and Yutong Gou and Yuexuan Xu and Yongyi Yang and Cheng Long and Raymond Chi-Wing Wong},
      year={2024},
      eprint={2409.09913},
      archivePrefix={arXiv},
      primaryClass={cs.DB},
      url={https://arxiv.org/abs/2409.09913},
}

@article{LVQ,
author = {Aguerrebere, Cecilia and Bhati, Ishwar Singh and Hildebrand, Mark and Tepper, Mariano and Willke, Theodore},
title = {Similarity Search in the Blink of an Eye with Compressed Indices},
year = {2023},
issue_date = {July 2023},
publisher = {VLDB Endowment},
volume = {16},
number = {11},
issn = {2150-8097},
url = {https://doi.org/10.14778/3611479.3611537},
doi = {10.14778/3611479.3611537},
abstract = {Nowadays, data is represented by vectors. Retrieving those vectors, among millions and billions, that are similar to a given query is a ubiquitous problem, known as similarity search, of relevance for a wide range of applications. Graph-based indices are currently the best performing techniques for billion-scale similarity search. However, their random-access memory pattern presents challenges to realize their full potential. In this work, we present new techniques and systems for creating faster and smaller graph-based indices. To this end, we introduce a novel vector compression method, Locally-adaptive Vector Quantization (LVQ), that uses per-vector scaling and scalar quantization to improve search performance with fast similarity computations and a reduced effective bandwidth, while decreasing memory footprint and barely impacting accuracy. LVQ, when combined with a new high-performance computing system for graph-based similarity search, establishes the new state of the art in terms of performance and memory footprint. For billions of vectors, LVQ outcompetes the second-best alternatives: (1) in the low-memory regime, by up to 20.7x in throughput with up to a 3x memory footprint reduction, and (2) in the high-throughput regime by 5.8x with 1.4x less memory.},
journal = {Proc. VLDB Endow.},
month = jul,
pages = {3433–3446},
numpages = {14}
}

@misc{wangweipca,
      title={Fast High-dimensional Approximate Nearest Neighbor Search with Efficient Index Time and Space},
      author={Mingyu Yang and Wentao Li and Wei Wang},
      year={2025},
      eprint={2411.06158},
      archivePrefix={arXiv},
      primaryClass={cs.DB},
      url={https://arxiv.org/abs/2411.06158},
}

@misc{yin2025gorgeousrevisitingdatalayout,
      title={Gorgeous: Revisiting the Data Layout for Disk-Resident High-Dimensional Vector Search}, 
      author={Peiqi Yin and Xiao Yan and Qihui Zhou and Hui Li and Xiaolu Li and Lin Zhang and Meiling Wang and Xin Yao and James Cheng},
      year={2025},
      eprint={2508.15290},
      archivePrefix={arXiv},
      primaryClass={cs.DB},
      url={https://arxiv.org/abs/2508.15290}, 
}

@misc{gao2023highdimensionalapproximatenearestneighbor,
      title={High-Dimensional Approximate Nearest Neighbor Search: with Reliable and Efficient Distance Comparison Operations},
      author={Jianyang Gao and Cheng Long},
      year={2023},
      eprint={2303.09855},
      archivePrefix={arXiv},
      primaryClass={cs.DS},
      url={https://arxiv.org/abs/2303.09855},
}

@INPROCEEDINGS{8104097,
  author={Alon, Noga and Klartag, Bo'az},
  booktitle={2017 IEEE 58th Annual Symposium on Foundations of Computer Science (FOCS)},
  title={Optimal Compression of Approximate Inner Products and Dimension Reduction},
  year={2017},
  volume={},
  number={},
  pages={639-650},
  keywords={Additives;Correlation;Tools;Data structures;Computer science;compression scheme;dimension reduction;Gaussian correlation;epsilon-net},
  doi={10.1109/FOCS.2017.65}}

@online{faiss,
  author="Faiss",
  title="{Faiss}",
  url="https://github.com/facebookresearch/faiss",
  note="(2023)",
}

@online{REC-RAG,
  author="TREC-RAG",
  title="TREC-RAG Corpus 2024",
  url="https://trec-rag.github.io/annoucements/2024-corpus-finalization/",
  note="(2024)",
}

@online{CohereV3,
  author="Cohere",
  title="Cohere Embed V3",
  url="https://cohere.com/blog/introducing-embed-v3",
  note="(2023)",
}

@online{OpenAI,
  author="OpenAI",
  title="New embedding models and API updates.",
  url="https://openai.com/index/new-embedding-models-and-api-updates/",
  note="(2024)",
}

@article{AUMULLER2020101374,
title = {ANN-Benchmarks: A benchmarking tool for approximate nearest neighbor algorithms},
journal = {Information Systems},
volume = {87},
pages = {101374},
year = {2020},
issn = {0306-4379},
doi = {https://doi.org/10.1016/j.is.2019.02.006},
url = {https://www.sciencedirect.com/science/article/pii/S0306437918303685},
author = {Martin Aumüller and Erik Bernhardsson and Alexander Faithfull},
keywords = {Benchmarking, Nearest neighbor search, Evaluation},
abstract = {This paper describes ANN-Benchmarks, a tool for evaluating the performance of in-memory approximate nearest neighbor algorithms. It provides a standard interface for measuring the performance and quality achieved by nearest neighbor algorithms on different standard data sets. It supports several different ways of integrating k-NN algorithms, and its configuration system automatically tests a range of parameter settings for each algorithm. Algorithms are compared with respect to many different (approximate) quality measures, and adding more is easy and fast; the included plotting front-ends can visualize these as images, LaTeX plots, and websites with interactive plots. ANN-Benchmarks aims to provide a constantly updated overview of the current state of the art of k-NN algorithms. In the short term, this overview allows users to choose the correct k-NN algorithm and parameters for their similarity search task; in the longer term, algorithm designers will be able to use this overview to test and refine automatic parameter tuning. The paper gives an overview of the system, evaluates the results of the benchmark, and points out directions for future work. Interestingly, very different approaches to k-NN search yield comparable quality-performance trade-offs. The system is available at http://ann-benchmarks.com.}
}

@ARTICLE{5432202,
  author={Jégou, Herve and Douze, Matthijs and Schmid, Cordelia},
  journal={IEEE Transactions on Pattern Analysis and Machine Intelligence},
  title={Product Quantization for Nearest Neighbor Search},
  year={2011},
  volume={33},
  number={1},
  pages={117-128},
  keywords={Quantization;Nearest neighbor searches;Indexing;Neural networks;Euclidean distance;File systems;Scalability;Image databases;Permission;Electronic mail;High-dimensional indexing;image indexing;very large databases;approximate search.},
  doi={10.1109/TPAMI.2010.57}}

@ARTICLE{8681160,
  author={Li, Wen and Zhang, Ying and Sun, Yifang and Wang, Wei and Li, Mingjie and Zhang, Wenjie and Lin, Xuemin},
  journal={IEEE Transactions on Knowledge and Data Engineering},
  title={Approximate Nearest Neighbor Search on High Dimensional Data — Experiments, Analyses, and Improvement},
  year={2020},
  volume={32},
  number={8},
  pages={1475-1488},
  keywords={Machine learning;Performance evaluation;Data models;Nearest  neighbor methods;Similarity search;approximate nearest neighbor search;high-dimensional space;metric space;dense vector},
  doi={10.1109/TKDE.2019.2909204}}

@INPROCEEDINGS{4492921,
  author={Patella, Marco and Ciaccia, Paolo},
  booktitle={First International Workshop on Similarity Search and Applications (sisap 2008)},
  title={The Many Facets of Approximate Similarity Search},
  year={2008},
  volume={},
  number={},
  pages={10-21},
  keywords={Optimal scheduling;Extraterrestrial measurements;Information retrieval;Euclidean distance;Spatial resolution;Degradation;Costs},
  doi={10.1109/SISAP.2008.18}}

@inproceedings{10.1145/2213836.2213898,
author = {Gan, Junhao and Feng, Jianlin and Fang, Qiong and Ng, Wilfred},
title = {Locality-sensitive hashing scheme based on dynamic collision counting},
year = {2012},
isbn = {9781450312479},
publisher = {Association for Computing Machinery},
address = {New York, NY, USA},
url = {https://doi.org/10.1145/2213836.2213898},
doi = {10.1145/2213836.2213898},
abstract = {Locality-Sensitive Hashing (LSH) and its variants are well-known methods for solving the c-approximate NN Search problem in high-dimensional space. Traditionally, several LSH functions are concatenated to form a "static" compound hash function for building a hash table. In this paper, we propose to use a base of m single LSH functions to construct "dynamic" compound hash functions, and define a new LSH scheme called Collision Counting LSH (C2LSH). If the number of LSH functions under which a data object o collides with a query object q is greater than a pre-specified collision threhold l, then o can be regarded as a good candidate of c-approximate NN of q. This is the basic idea of C2LSH.Our theoretical studies show that, by appropriately choosing the size of LSH function base m and the collision threshold l, C2LSH can have a guarantee on query quality. Notably, the parameter m is not affected by dimensionality of data objects, which makes C2LSH especially good for high dimensional NN search. The experimental studies based on synthetic datasets and four real datasets have shown that C2LSH outperforms the state of the art method LSB-forest in high dimensional space.},
booktitle = {Proceedings of the 2012 ACM SIGMOD International Conference on Management of Data},
pages = {541–552},
numpages = {12},
keywords = {locality sensitive hashing, dynamic collision counting},
location = {Scottsdale, Arizona, USA},
series = {SIGMOD '12}
}

@article{10.14778/2850469.2850470,
author = {Huang, Qiang and Feng, Jianlin and Zhang, Yikai and Fang, Qiong and Ng, Wilfred},
title = {Query-aware locality-sensitive hashing for approximate nearest neighbor search},
year = {2015},
issue_date = {September 2015},
publisher = {VLDB Endowment},
volume = {9},
number = {1},
issn = {2150-8097},
url = {https://doi.org/10.14778/2850469.2850470},
doi = {10.14778/2850469.2850470},
abstract = {Locality-Sensitive Hashing (LSH) and its variants are the well-known indexing schemes for the c-Approximate Nearest Neighbor (c-ANN) search problem in high-dimensional Euclidean space. Traditionally, LSH functions are constructed in a query-oblivious manner in the sense that buckets are partitioned before any query arrives. However, objects closer to a query may be partitioned into different buckets, which is undesirable. Due to the use of query-oblivious bucket partition, the state-of-the-art LSH schemes for external memory, namely C2LSH and LSB-Forest, only work with approximation ratio of integer c ≥ 2.In this paper, we introduce a novel concept of query-aware bucket partition which uses a given query as the "anchor" for bucket partition. Accordingly, a query-aware LSH function is a random projection coupled with query-aware bucket partition, which removes random shift required by traditional query-oblivious LSH functions. Notably, query-aware bucket partition can be easily implemented so that query performance is guaranteed. We propose a novel query-aware LSH scheme named QALSH for c-ANN search over external memory. Our theoretical studies show that QALSH enjoys a guarantee on query quality. The use of query-aware LSH function enables QALSH to work with any approximation ratio c > 1. Extensive experiments show that QALSH outperforms C2LSH and LSB-Forest, especially in high-dimensional space. Specifically, by using a ratio c < 2, QALSH can achieve much better query quality.},
journal = {Proc. VLDB Endow.},
month = sep,
pages = {1–12},
numpages = {12}
}

@article{10.1145/3654990,
author = {Su, Yongye and Sun, Yinqi and Zhang, Minjia and Wang, Jianguo},
title = {Vexless: A Serverless Vector Data Management System Using Cloud Functions},
year = {2024},
issue_date = {June 2024},
publisher = {Association for Computing Machinery},
address = {New York, NY, USA},
volume = {2},
number = {3},
url = {https://doi.org/10.1145/3654990},
doi = {10.1145/3654990},
abstract = {Cloud functions, exemplified by AWS Lambda and Azure Functions, are emerging as a new computing paradigm in the cloud. They provide elastic, serverless, and low-cost cloud computing, making them highly suitable for bursty and sparse workloads, which are quite common in practice. Thus, there is a new trend in designing data systems that leverage cloud functions. In this paper, we focus on vector databases, which have recently gained significant attention partly due to large language models. In particular, we investigate how to use cloud functions to build high-performance and cost-efficient vector databases. This presents significant challenges in terms of how to perform sharding, how to reduce communication overhead, and how to minimize cold-start times.In this paper, we introduce Vexless, the first vector database system optimized for cloud functions. We present three optimizations to address the challenges. To perform sharding, we propose a global coordinator (orchestrator) that assigns workloads to Cloud function instances based on their available hardware resources. To overcome communication overhead, we propose the use of stateful cloud functions, eliminating the need for costly communications during synchronization. To minimize cold-start overhead, we introduce a workload-aware Cloud function lifetime management strategy. Vexless has been implemented using Azure Functions. Experimental results demonstrate that Vexless can significantly reduce costs, especially on bursty and sparse workloads, compared to cloud VM instances, while achieving similar or higher query performance and accuracy.},
journal = {Proc. ACM Manag. Data},
month = may,
articleno = {187},
numpages = {26},
keywords = {cloud functions, serverless computing, serverless databases, vector databases}
}

@article{Yusuke-Matsui2018,
  title={[Invited Paper] A Survey of Product Quantization},
  author={Yusuke Matsui and Yusuke Uchida and Herv\&eacute; J\&eacute;gou and Shin'ichi Satoh},
  journal={ITE Transactions on Media Technology and Applications},
  volume={6},
  number={1},
  pages={2-10},
  year={2018},
  doi={10.3169/mta.6.2}
}

@inproceedings{devlin2019bert,
  title={Bert: Pre-training of deep bidirectional transformers for language understanding},
  author={Devlin, Jacob and Chang, Ming-Wei and Lee, Kenton and Toutanova, Kristina},
  booktitle={Proceedings of the 2019 conference of the North American chapter of the association for computational linguistics: human language technologies, volume 1 (long and short papers)},
  pages={4171--4186},
  year={2019}
}

@inproceedings{radford2021learning,
  title={Learning transferable visual models from natural language supervision},
  author={Radford, Alec and Kim, Jong Wook and Hallacy, Chris and Ramesh, Aditya and Goh, Gabriel and Agarwal, Sandhini and Sastry, Girish and Askell, Amanda and Mishkin, Pamela and Clark, Jack and others},
  booktitle={International conference on machine learning},
  pages={8748--8763},
  year={2021},
  organization={PmLR}
}

@inproceedings{shvetsova2022everything,
  title={Everything at once-multi-modal fusion transformer for video retrieval},
  author={Shvetsova, Nina and Chen, Brian and Rouditchenko, Andrew and Thomas, Samuel and Kingsbury, Brian and Feris, Rogerio S and Harwath, David and Glass, James and Kuehne, Hilde},
  booktitle={Proceedings of the ieee/cvf conference on computer vision and pattern recognition},
  pages={20020--20029},
  year={2022}
}

@article{li2022competition,
  title={Competition-level code generation with alphacode},
  author={Li, Yujia and Choi, David and Chung, Junyoung and Kushman, Nate and Schrittwieser, Julian and Leblond, R{\'e}mi and Eccles, Tom and Keeling, James and Gimeno, Felix and Dal Lago, Agustin and others},
  journal={Science},
  volume={378},
  number={6624},
  pages={1092--1097},
  year={2022},
  publisher={American Association for the Advancement of Science}
}

@online{Elastic,
  author="Elastic",
  title="Elastic",
  url=" https://www.elastic.co/enterprise-search/vector-search",
  note="(2024)",
}

@article{mohoney2023high,
  title={High-throughput vector similarity search in knowledge graphs},
  author={Mohoney, Jason and Pacaci, Anil and Chowdhury, Shihabur Rahman and Mousavi, Ali and Ilyas, Ihab F and Minhas, Umar Farooq and Pound, Jeffrey and Rekatsinas, Theodoros},
  journal={Proceedings of the ACM on Management of Data},
  volume={1},
  number={2},
  pages={1--25},
  year={2023},
  publisher={ACM New York, NY, USA}
}

@online{pgvector,
  author="pgvector",
  title="pgvector",
  url="https://github.com/pgvector/pgvector",
  note="(2024)",
}

@online{Qdrant,
  author="Qdrant",
  title="Qdrant",
  url="https://qdrant.tech/",
  note="(2024)",
}

@online{SingleStore,
  author="SingleStore",
  title="SingleStore",
  url="https://www.singlestore.com/built-in-vector",
  note="(2024)",
}

@inproceedings{wang2021milvus,
  title={Milvus: A purpose-built vector data management system},
  author={Wang, Jianguo and Yi, Xiaomeng and Guo, Rentong and Jin, Hai and Xu, Peng and Li, Shengjun and Wang, Xiangyu and Guo, Xiangzhou and Li, Chengming and Xu, Xiaohai and others},
  booktitle={Proceedings of the 2021 International Conference on Management of Data},
  pages={2614--2627},
  year={2021}
}

@inproceedings{yang2020pase,
  title={Pase: Postgresql ultra-high-dimensional approximate nearest neighbor search extension},
  author={Yang, Wen and Li, Tao and Fang, Gai and Wei, Hong},
  booktitle={Proceedings of the 2020 ACM SIGMOD international conference on management of data},
  pages={2241--2253},
  year={2020}
}

@article{lewis2020retrieval,
  title={Retrieval-augmented generation for knowledge-intensive nlp tasks},
  author={Lewis, Patrick and Perez, Ethan and Piktus, Aleksandra and Petroni, Fabio and Karpukhin, Vladimir and Goyal, Naman and K{\"u}ttler, Heinrich and Lewis, Mike and Yih, Wen-tau and Rockt{\"a}schel, Tim and others},
  journal={Advances in neural information processing systems},
  volume={33},
  pages={9459--9474},
  year={2020}
}

@inproceedings{huang2020embedding,
  title={Embedding-based retrieval in facebook search},
  author={Huang, Jui-Ting and Sharma, Ashish and Sun, Shuying and Xia, Li and Zhang, David and Pronin, Philip and Padmanabhan, Janani and Ottaviano, Giuseppe and Yang, Linjun},
  booktitle={Proceedings of the 26th ACM SIGKDD International Conference on Knowledge Discovery \& Data Mining},
  pages={2553--2561},
  year={2020}
}

@article{xiong2020approximate,
  title={Approximate nearest neighbor negative contrastive learning for dense text retrieval},
  author={Xiong, Lee and Xiong, Chenyan and Li, Ye and Tang, Kwok-Fung and Liu, Jialin and Bennett, Paul and Ahmed, Junaid and Overwijk, Arnold},
  journal={arXiv preprint arXiv:2007.00808},
  year={2020}
}

@inproceedings{babenko2014additive,
  title={Additive quantization for extreme vector compression},
  author={Babenko, Artem and Lempitsky, Victor},
  booktitle={Proceedings of the IEEE Conference on Computer Vision and Pattern Recognition},
  pages={931--938},
  year={2014}
}

@inproceedings{ge2013optimized,
  title={Optimized product quantization for approximate nearest neighbor search},
  author={Ge, Tiezheng and He, Kaiming and Ke, Qifa and Sun, Jian},
  booktitle={Proceedings of the IEEE conference on computer vision and pattern recognition},
  pages={2946--2953},
  year={2013}
}

@article{jegou2010product,
  title={Product quantization for nearest neighbor search},
  author={Jegou, Herve and Douze, Matthijs and Schmid, Cordelia},
  journal={IEEE transactions on pattern analysis and machine intelligence},
  volume={33},
  number={1},
  pages={117--128},
  year={2010},
  publisher={IEEE}
}

@inproceedings{martinez2018lsq++,
  title={LSQ++: Lower running time and higher recall in multi-codebook quantization},
  author={Martinez, Julieta and Zakhmi, Shobhit and Hoos, Holger H and Little, James J},
  booktitle={Proceedings of the European conference on computer vision (ECCV)},
  pages={491--506},
  year={2018}
}

@article{wang2017survey,
  title={A survey on learning to hash},
  author={Wang, Jingdong and Zhang, Ting and Sebe, Nicu and Shen, Heng Tao and others},
  journal={IEEE transactions on pattern analysis and machine intelligence},
  volume={40},
  number={4},
  pages={769--790},
  year={2017},
  publisher={IEEE}
}

@inproceedings{kalantidis2014locally,
  title={Locally optimized product quantization for approximate nearest neighbor search},
  author={Kalantidis, Yannis and Avrithis, Yannis},
  booktitle={Proceedings of the IEEE conference on computer vision and pattern recognition},
  pages={2321--2328},
  year={2014}
}

@article{chen2010approximate,
  title={Approximate nearest neighbor search by residual vector quantization},
  author={Chen, Yongjian and Guan, Tao and Wang, Cheng},
  journal={Sensors},
  volume={10},
  number={12},
  pages={11259--11273},
  year={2010},
  publisher={Molecular Diversity Preservation International (MDPI)}
}

@inproceedings{martinez2016revisiting,
  title={Revisiting additive quantization},
  author={Martinez, Julieta and Clement, Joris and Hoos, Holger H and Little, James J},
  booktitle={Computer Vision--ECCV 2016: 14th European Conference, Amsterdam, The Netherlands, October 11-14, 2016, Proceedings, Part II 14},
  pages={137--153},
  year={2016},
  organization={Springer}
}

@misc{foote2003treeq,
  title={TreeQ Manual V0. 8},
  author={Foote, Jonathan T},
  year={2003},
  publisher={September}
}

@inproceedings{charami2007performance,
  title={Performance evaluation of TreeQ and LVQ classifiers for music information retrieval},
  author={Charami, Matina and Halloush, Rami and Tsekeridou, Sofia},
  booktitle={IFIP International Conference on Artificial Intelligence Applications and Innovations},
  pages={331--338},
  year={2007},
  organization={Springer}
}

@article{liu2016query,
  title={Query-adaptive hash code ranking for large-scale multi-view visual search},
  author={Liu, Xianglong and Huang, Lei and Deng, Cheng and Lang, Bo and Tao, Dacheng},
  journal={IEEE Transactions on Image Processing},
  volume={25},
  number={10},
  pages={4514--4524},
  year={2016},
  publisher={IEEE}
}

@inproceedings{guo2016quantization,
  title={Quantization based fast inner product search},
  author={Guo, Ruiqi and Kumar, Sanjiv and Choromanski, Krzysztof and Simcha, David},
  booktitle={Artificial intelligence and statistics},
  pages={482--490},
  year={2016},
  organization={PMLR}
}

@article{wright2015coordinate,
  title={Coordinate descent algorithms},
  author={Wright, Stephen J},
  journal={Mathematical programming},
  volume={151},
  number={1},
  pages={3--34},
  year={2015},
  publisher={Springer}
}

@article{wang2018composite,
  title={Composite quantization},
  author={Wang, Jingdong and Zhang, Ting},
  journal={IEEE transactions on pattern analysis and machine intelligence},
  volume={41},
  number={6},
  pages={1308--1322},
  year={2018},
  publisher={IEEE}
}

@inproceedings{andre2016cache,
  title={Cache locality is not enough: High-performance nearest neighbor search with product quantization fast scan},
  author={Andr{\'e}, Fabien and Kermarrec, Anne-Marie and Le Scouarnec, Nicolas},
  booktitle={42nd International Conference on Very Large Data Bases},
  volume={9},
  number={4},
  pages={12},
  year={2016}
}

@article{sun2023soar,
  title={SOAR: improved indexing for approximate nearest neighbor search},
  author={Sun, Philip and Simcha, David and Dopson, Dave and Guo, Ruiqi and Kumar, Sanjiv},
  journal={Advances in Neural Information Processing Systems},
  volume={36},
  pages={3189--3204},
  year={2023}
}

@inproceedings{li2017fexipro,
  title={FEXIPRO: fast and exact inner product retrieval in recommender systems},
  author={Li, Hui and Chan, Tsz Nam and Yiu, Man Lung and Mamoulis, Nikos},
  booktitle={Proceedings of the 2017 ACM International Conference on Management of Data},
  pages={835--850},
  year={2017}
}

@article{deng2024efficient,
  title={Efficient Data-aware Distance Comparison Operations for High-Dimensional Approximate Nearest Neighbor Search},
  author={Deng, Liwei and Chen, Penghao and Zeng, Ximu and Wang, Tianfu and Zhao, Yan and Zheng, Kai},
  journal={arXiv preprint arXiv:2411.17229},
  year={2024}
}

@article{tepper2023leanvec,
  title={LeanVec: Searching vectors faster by making them fit},
  author={Tepper, Mariano and Bhati, Ishwar Singh and Aguerrebere, Cecilia and Hildebrand, Mark and Willke, Ted},
  journal={arXiv preprint arXiv:2312.16335},
  year={2023}
}

@article{tepper2024gleanvec,
  title={GleanVec: Accelerating vector search with minimalist nonlinear dimensionality reduction},
  author={Tepper, Mariano and Bhati, Ishwar Singh and Aguerrebere, Cecilia and Willke, Ted},
  journal={arXiv preprint arXiv:2410.22347},
  year={2024}
}

@article{malkov2018efficient,
  title={Efficient and robust approximate nearest neighbor search using hierarchical navigable small world graphs},
  author={Malkov, Yu A and Yashunin, Dmitry A},
  journal={IEEE transactions on pattern analysis and machine intelligence},
  volume={42},
  number={4},
  pages={824--836},
  year={2018},
  publisher={IEEE}
}

@article{HNSW,
  title={Efficient and robust approximate nearest neighbor search using hierarchical navigable small world graphs},
  author={Malkov, Yu A and Yashunin, Dmitry A},
  journal={IEEE transactions on pattern analysis and machine intelligence},
  volume={42},
  number={4},
  pages={824--836},
  year={2018},
  publisher={IEEE}
}

@article{NSG,
  title={Fast approximate nearest neighbor search with the navigating spreading-out graph},
  author={Fu, Cong and Xiang, Chao and Wang, Changxu and Cai, Deng},
  journal={arXiv preprint arXiv:1707.00143},
  year={2017}
}

@String{Computing = "Computing" }

@String{Computer = "{IEEE} Computer" }

@String{Springer = "Springer-Verlag" }

@ArtifactSoftware{R,
    title = {R: A Language and Environment for Statistical Computing},
    author = {{R Core Team}},
    organization = {R Foundation for Statistical Computing},
    address = {Vienna, Austria},
    year = {2019},
    url = {https://www.R-project.org/},
}

\appendix









\end{document}